\newcommand{\bigO}{\mathcal{O}}
\newcommand{\F}{\mathbb{F}}
\renewcommand{\vec}[1]{\mathbf{#1}}
\renewcommand{\Pr}{\mathbb{P}}
\newlength{\strutdepth}%
\newcommand{\notes}[3]{
	\noindent{%
		\color{#1}{[#3]}\color{#1}}%
	\strut\vadjust{\kern-\strutdepth%
		\vtop to \strutdepth{%
			\baselineskip\strutdepth%
			\vss\llap{{\large\color{#1}\textbf{#2}\quad\color{black}}}\null%
		}%
	}%
}
\newcommand{\Simon}{\textsc{Simon}}
\newcommand{\cnot}{\mathbf{cnot}}
\newcommand{\swap}{\mathbf{swap}}
\newcommand{\LPN}{\text{LPN}}
\newcommand{\LSN}{\text{LSN}}
\newcommand{\CN}{\textsc{CN}}
\newcommand{\IBMQ}{\textsc{IBM-Q16}}
\newcommand{\Zero}{\{\vec 0\}}
\newcommand{\sorth}{\vec s^\perp}
\definecolor{orcidlogocol}{HTML}{A6CE39}
\title{Noisy Simon Period Finding}
\titlerunning{Noisy Simon Period Finding}
\author{
	Alexander May\thanks{Funded by DFG under Germany's Excellence Strategy - EXC 2092 CASA - 390781972.
	} \href{https://orcid.org/0000-0001-5965-5675}{\protect\includegraphics[height=\fontcharht\font`B]{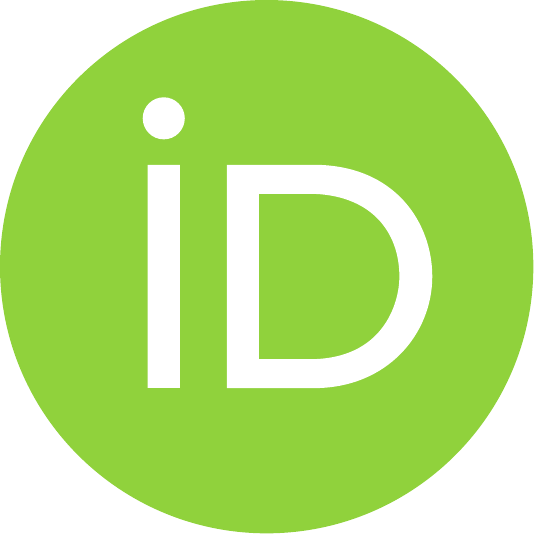}}\hspace*{1pt}
	\and
	Lars Schlieper$^\star$ \href{https://orcid.org/0000-0002-4870-1012}{\protect\includegraphics[height=\fontcharht\font`B]{orcid.pdf}}
	\and
	Jonathan Schwinger
	}
\institute{
	Horst G\"ortz Institute for IT Security \\
	Ruhr-University Bochum, Germany \\
	\email{\{alex.may,lars.schlieper,jonathan.schwinger\}@rub.de}}
\begin{document}

\maketitle

\begin{abstract}
Let $f: \F_2^n \rightarrow \F_2^n$ be a Boolean function with period $\vec s$. It is well-known that Simon's algorithm finds $\vec s$ in time polynomial in $n$ on quantum devices that are capable of performing error-correction. However, today's quantum devices are inherently noisy, too limited for error correction, and Simon's algorithm is not error-tolerant.

We show that even noisy quantum period finding computations may lead to speedups in comparison to purely classical computations.  To this end, we implemented Simon's quantum period finding circuit on the $15$-qubit quantum device IBM Q 16 Melbourne. Our experiments show that with a certain probability $\tau(n)$ we measure erroneous vectors that are not orthogonal to $\vec s$. We propose new, simple, but very effective smoothing techniques to classically mitigate physical noise effects such as e.g. IBM Q's bias towards the $0$-qubit.

After smoothing, our noisy quantum device provides us a statistical distribution that we can easily transform into an LPN instance with parameters $n$ and $\tau(n)$. Hence, in the noisy case we may not hope to find periods in time polynomial in $n$. However, we may still obtain a quantum advantage if the error $\tau(n)$ does not grow too large. This demonstrates that quantum devices may be useful for period finding, even before achieving the level of full error correction capability.

\keywords{Noise-tolerant Simon period finding, \IBMQ, LPN, quantum advantage}

\end{abstract}

%
%

\section{Introduction}

The discovery of Shor's quantum algorithm~\cite{FOCS:Shor94} for factoring and computing discrete logarithms in 1994 had a dramatic impact on public-key cryptography, initiating the fast growing field of post-quantum cryptography that studies problems supposed to be hard even on quantum computers,  such as e.g. Learning Parity with Noise (LPN)~\cite{FOCS:Alekhnovich03} and Learning with Errors (LWE)~\cite{STOC:Regev05}.

For some decades, the common belief was that the impact of quantum algorithms on symmetric crypto is way less dramatic, since the effect of Grover search can easily be handled by doubling the key size. However, starting with the initial work of Kuwakado, Morii~\cite{KuwakadoM12} and followed by Kaplan, Leurent, Leverrier and Naya-Plasencia~\cite{C:KLLN16} it was shown that (among others) the well-known Even-Mansour construction can be broken with quantum CPA-attacks~\cite{C:BonZha13} in polynomial time using Simon's quantum period finding algorithm~\cite{FOCS:Simon94}. This is especially interesting, because Even and Mansour~\cite{AC:EveMan91} proved that in the ideal cipher model any classical attack on their construction with $n$-bit keys requires $\Omega(2^{\frac n 2})$ steps. 

These results triggered a whole line of work that studies the impact of Simon's algorithm and its variants for symmetric key cryptography, including e.g.~\cite{SantoliSchaffner16,AC:LeaMay17,EC:AlaRus17,SAC:Bonnetain17,RSA:HosSas18,AC:BonNay18,DBLP:asiacrypt19}. In a nutshell, Simon's quantum circuit produces for a periodic function $f:\F_2^n \rightarrow \F_2^n$ with period $\vec s \in \F_2^n$, i.e. $f(\vec x)=f(\vec z)$ iff $\vec z \in \{\vec x, \vec x+ \vec s\}$, via quantum measurements uniformly distributed vectors $\vec y$ that are orthogonal to $\vec s$. It is not hard to see that from a basis of $\vec y$'s that spans the subspace orthogonal to $\vec s$,  the period $\vec s$ can be computed via elementary linear algebra in time polynomial in $n$. Thus, Simon's algorithm finds the period with a linear number of quantum measurements (and  calls to $f$),  and some polynomial  time classical post-processing. On any purely classical computer however, finding the period of $f$ requires in general $\Omega(2^{\frac n 2})$ operations~\cite{MontanaroW16}. 
Let us stress again that we consider quantum CPA attacks via Simon, i.e. the attacker has access to a cipher that is implemented quantumly---a very powerful attack model.

\paragraph{Our contributions.} 
We implemented Simon's algorithm on IBM's freely available \textsc{Q16 Melbourne}~\cite{IBMQ16}, called \IBMQ\ in the following, that realizes $15$-qubit quantum circuits. Since Simon's quantum circuit requires for $n$-bit periodic functions $2n$ qubits, we were able to implement functions up to $n=7$ bits. Due to its limited size, \IBMQ\ is not capable of performing full error correction \cite{calderbank1997quantum} for $n>1$. 
However, we show that error correction is no necessary requirement for achieving quantum speedups.

\medskip

\noindent {\bf Implementation.} Our experiments show that with some (significant) error probability $\tau$, we measure on \IBMQ\ vectors $\vec y$ that are {\em not orthogonal} to~$\vec s$. The error probability $\tau$ depends on many factors, such as the number of $1$- and $2$-qubit
gates that we use to realize Simon's circuit, \IBMQ's topology that allows only limited $2$-qubit applications, and even the individual qubits that we use. 
We optimize our Simon implementation to achieve minimal error $\tau$. Since increasing $n$ requires an increasing amount of gates, we discover experimentally that $\tau(n)$ grows as a function of $n$. For the function $f$ that we implemented, we found $\tau$-values ranging between $\tau(2)=0.09$ and $\tau(7)=0.13$.
We would like to stress that our choice of $f$ is highly optimized to minimize IBM-Q16's error. Any realistic real-word cryptographic $f$ would at the moment result in outputs close to random noise, i.e. with $\tau(n)$ close to $\frac 12$.

For our simple $f$ despite the errors we still qualitatively  observe the desired quantum effect: Vectors $\vec y$ orthogonal to $\vec s$ appear with significant larger probabilities than vectors not orthogonal to $\vec s$. Similar experimental observations have been achieved in Tame et al.~\cite{PhysRevLett.113.200501}. \\[-0.1cm]

\noindent {\bf Smoothing techniques.} In the error free case, Simon's circuit produces vectors that are uniformly distributed. However, on \IBMQ\ this is not the case. First, \IBMQ's qubits have different noise level, hence different reliability. Second, we experimentally observe vectors with small Hamming weight more frequently, the measured qubits have a bias towards $0$.

To mitigite both effects we introduce simple, but effective {\em smoothing techniques}. First, the quality of qubits can be averaged by introducing permutations that preserve the overall error probability $\tau$. Second, the $0$-bias can be removed by suitable addition of vectors, both quantumly and classically. In combination, our smoothing methods are effective in the sense that they provide a distribution where vectors orthogonal to $\vec s$ appear uniformly distributed with probability $1-\tau$, and vectors not orthogonal to $\vec s$ appear uniformly distributed with probability~$\tau$. Note that our smoothing techniques do not reduce the overall error $\tau$, but smooth the error distribution.

We call the problem of recovering $\vec s \in \F_2^n$ from such a distribution {\em Learning Simon with Noise} (LSN) with parameters $n$ and $\tau$.
Notice that  intuitively it should be hard to distinguish orthogonal vectors from non-orthogonal ones. \\[-0.1cm]

\noindent {\bf Hardness.}
We show that solving LSN with parameters $n, \tau$ is tightly polynomial time equivalent to solving the famous {\em Learning Parity with Noise} (LPN) problem with the same parameters $n, \tau$.
The core of our reduction shows that LSN samples coming from smoothed quantum measurements of Simon's circuit can be turned into perfectly distributed LPN samples, and vice versa. Hence, smoothed quantum measurements of Simon's circuit realize a {\em physical LPN oracle}. 

From an error-tolerance perspective, our \LPN-to-\LSN\ reduction may at first sound quite negative, since it is commen belief that we cannot solve \LPN\ (and thus also not \LSN) in time polynomial in $(n,\tau)$ --- not even on a quantum computer.
\\[-0.1cm]

\noindent{\bf Error Handling.} On the positive side, we may use the converse LSN-to-LPN reduction to handle errors from noisy quantum devices like $\IBMQ$ via LPN-solving algorithms. Theoretically, the best algorithm for solving LPN with constant $\tau$ is the BKW-algorithm of Blum, Kalai and Wasserman~\cite{STOC:BluKalWas00} with time complexity $2^{\bigO\big(\frac{n}{\log(\frac n {\tau})}\big)}$. This already improves on the classical time $2^{\frac n 2}$ for period finding.

Practically, the current LPN records  with errors $\tau \in [0.09, 0.13]$---as observed in our \IBMQ\ experiments---  are solved with variants of the algorithms \textsc{Pooled Gauss} and \textsc{Well-Pooled Gauss} of Esser, K\"ubler, May~\cite{C:EssKubMay17}. We show that \textsc{Pooled Gauss} solves LSN for $\tau \leq 0.292$ faster than classical period finding algorithms. \textsc{Well-Pooled Gauss} even improves on any classical period finding algorithm for all errors $\tau < \frac 1 2$.

\textsc{Well-Pooled Gauss} is able to handle errors in time $2^{cn}$, where $c<\frac 1 2$ is constant for constant $\tau$. For the error-free case $\tau=0$, we obtain polynomial time as predicted by Simon's analysis. In the noisy case $0 <\tau < \frac 1 2$ we achieve exponential run time, yet still improve over purely classical computation. This indicates that we achieve {\em quantum advantage} for the Simon period finding problem on sufficiently large computers, even in the presence of errors: Our quantum oracle helps us in speeding up computation! But as opposed to the exponential speedup from the (unrealistic) error-free Simon setting $\tau =0$, we obtain in the practically relevant noisy Simon setting $0 <\tau < \frac 1 2$ only a {\em polynomial speedup} with a polynomial of degree $\frac{1}{2c} >1$.

 Assume that in a possibly far future one could build a quantum device with $486$ qubits performing Simon's circuit on a $243$-bit {\em realistic real-world cryptographic} periodic function with error $\tau(486) = \frac 1 8$. Then our smoothed techniques could translate the noisy quantum data into an \LPN-instance with $(n,\tau)=(243, \frac 1 8)$. Such an LPN instance was solved in \cite{C:EssKubMay17} on 64 threads in only $15$ days, whereas classically period finding would require $2^{121}$ steps.
 
 We would like to stress that our introduction of a simple error parameter $\tau$ is to indicate at which point in the future quantum devices may help to speed up Simon-based quantum cryptanalysis. We do not give any predictions how $\tau(n)$ behaves for future devices, nor for realistic cryptographic functions. This remains an open problem. 
 
 \medskip
 
 Our paper is organized as follows. In \cref{sec:simon} we recall Simon's original quantum circuit, and already introduce our \LSN\ Error Model. In \cref{sec:ibmq} we run \IBMQ\ experiments, and show  in \cref{sec:smooth} how to smooth the results of the quantum computations\footnote{\IBMQ\ data can be found in our supplementary material.} such that they fit our error model. In \cref{sec:reduction} we show the polynomial time equivalence of \LSN\ and \LPN. In \cref{sec:correct} we theoretically show that quantum measurements with error~$\tau$ in combination with LPN-solvers outperform classical period finding for any $\tau < \frac 1 2$. Eventually, in \cref{sec:cexperiments} we experimentally extract periods from noisy \IBMQ\ measurements.

%
%

\section{Simon's Algorithm in the Noisy Case}
\label{sec:simon}

\paragraph{Notation.}

All $\log$s in this paper are base $2.$ Let $\vec x \in \F_2^n$ denote a binary vector with coordinates $\vec x = (x_{n-1}, \ldots, x_0)$ and Hamming weight $h(\vec x) = \sum_{i=0}^{n-1} x_i$. Let $\vec 0 \in \F_2^n$ be the vector with all-zero coordinates. We denote by ${\cal U}$ the uniform distribution over $\F_2$, and by ${\cal U}_n$ the uniform distribution over $\F_2^n$. If a random variable $X$ is chosen from distribution ${\cal U}$, we write $X \sim {\cal U}$. We denote by $\textrm{Ber}_{\tau}$ the Bernoulli distribution for $\F_2$, i.e. a $0,1$-valued $X \sim \textrm{Ber}_{\tau}$ with $\Pr[X=1] = \tau$.

Two vectors $\vec x, \vec y$ are {\em orthogonal} if their inner product $\langle \vec x, \vec y \rangle:= \sum_{i=0}^{n-1} x_i y_i \bmod~2$ is~$0$, otherwise they are called {\em non-orthogonal}. Let $\vec s \in \F_2^n$. Then we denote the subspace of all vectors orthogonal to $\vec s$ as
\[
  \vec s^{\perp} = \left\{ \vec x \in \F_2^n \; | \; \langle \vec x, \vec s \rangle = 0    \right\}.
\]
Let $Y=\{\vec y_1, \ldots, \vec y_k\} \subseteq \F_2^n$. Then we define $Y^{\perp} = \{\vec x \mid \langle \vec x, \vec y_i\rangle = 0 \textrm{ for all } i \}$.

For a Boolean function $f: \F_2^n \rightarrow \F_2^n$ we denote its {\em universal (quantum) embedding} by
\[
U_f: \F_2^{2n} \rightarrow \F_2^{2n} \textrm{ with } (\vec x, \vec y) \mapsto (\vec x, f(\vec x)+\vec y).
\]
Notice that $U_f(U_f(\vec x,\vec y)) = (\vec x, \vec y)$.

Let $\ket{x} \in \mathbb{C}^2$ with $x \in \F_2$ be a qubit. We denote by $H$ the {\em Hadamard function}
\[
  x \mapsto \frac 1 {\sqrt 2} (\ket{0} + (-1)^x \ket{1}).
\]
We briefly write $H_n$ for the $n$-fold tensor product $H \otimes \ldots \otimes H$. Let $\ket{x}\ket{y}\in \mathbb{C}^4$ be a $2$-qubit system. The $\cnot$ (controlled \textbf{not}) function is the universal embedding of the identity function, i.e. $\ket{x}\ket{y} \mapsto \ket{x}\ket{x+y}$. We call the first qubit $\ket{x}$ {\em control bit}, since we perform a \textbf{not} on $\ket{y}$ iff $x=1$.

A {\em Simon function} is a periodic $(2:1)$-Boolean function defined as follows.

\begin{definition}[Simon function/problem]\label{def:simon}
Let $f:\F_2^n\to \F_2^n$. We call $f$ a {\em Simon function} if there exists some period $\vec s \in \F_2^n \setminus\Zero$ such that for all $\vec x \not= \vec y \in \F_2^n$ we have
\[
  f( \vec x) = f( \vec y) \Leftrightarrow \vec y = \vec x + \vec s.
\]
In {\em Simon's problem} we have to find $\vec s$ given oracle access to $f$.
\end{definition}

In order to solve Simon's problem classically, we have to find some collision $\vec x \not= \vec y$ satisfying $f(\vec x) = f(\vec y)$. It is well-known that this requires $\Omega( 2^{\frac n 2})$ function evaluations.

Simon's quantum algorithm~\cite{FOCS:Simon94}, called \textsc{Simon} (see \cref{alg:simon}), solves Simon's problem with only $\bigO(n)$ function evalu\-ations on a quantum circuit. It is known that on input $\ket{0^{n}} \otimes \ket{0^{n}}$ a measurement of the first $n$ qubits of the quantum circuit $Q^{\Simon}_{f}$ depicted in \cref{circuit:simon} yields some $\vec y \in \F_2^n$ that is orthogonal to~$\vec s$. Moreover, $\vec y \in \F_2^n$ is uniformly distributed in the subspace ${\vec s}^{\perp}$, i.e. we obtain 
\begin{wrapfigure}{r}{0.5\textwidth}
	\centering
	\vspace*{-2em}
	\includegraphics[width=.5\textwidth]{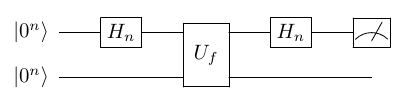}
	\caption{Quantum circuit $Q^{\Simon}_{f}$}\label{circuit:simon}
	\vspace*{-1em}
\end{wrapfigure}
each $\vec y \in {\vec s}^{\perp}$ with probability $\frac 1 {2^{n-1}}$.\linebreak
\textsc{Simon} repeats to measure  $Q^{\Simon}_{f}$ until it has collected $n-1$ linearly independent vectors $\vec y_1, \ldots, \vec y_{n-1}$, from which $\vec s$ can be computed via linear algebra in polynomial time. It is not hard to see that the collection of $n-1$ linearly independent vectors requires only $\bigO(n)$ function evaluations.

		\begin{algorithm}[htb]
			\DontPrintSemicolon
			\SetAlgoLined
			\SetKwInOut{Input}{Input}\SetKwInOut{Output}{Output}

			\SetKwComment{COMMENT}{$\triangleright$\ }{}

			\Input{
				Simon function $f: \F_2^n \rightarrow \F_2^n$.
			}
			\Output{
				Period $\vec s \in \F_2^n$
			}

			Set $Y=\emptyset$.\;
			\Repeat{$|Y|=n-1$}{
				Run $Q^{\Simon}_{f}$ on $\ket{0^n} \otimes \ket{0^n}$. \;
				Let $\vec y \in \F_2^n$ be the measurement of the first $n$ qubits. \label{line:choicey} \;
				If $\vec y \notin \textrm{span}(Y)$, then include $\vec y$ in $Y$.
			}
			Compute the unique $\vec s \in Y^{\perp} \setminus \{ \vec 0 \}$. \;
			\KwRet{$\vec s$.}

			\caption{\textsc{Simon}\label{alg:simon}}
		\end{algorithm}

At this point we should stress that \textsc{Simon} only works for {\em noiseless} quantum computations. Hence we have to ensure that each $\vec y$ is indeed in ${\vec s}^{\perp}$. Assume that we obtain in line~\ref{line:choicey} of algorithm \textsc{Simon} at least a single $\vec y$ with $\langle \vec y, \vec s \rangle=1$. Then the output of \textsc{Simon} is always false! Thus, \textsc{Simon} is not robust against noisy quantum computations.

More precisely, if we obtain in line~\ref{line:choicey} erroneous $\vec y \notin {\vec s}^{\perp}$ with probability $\tau$, $0 < \tau \leq \frac 1 2$, then \textsc{Simon} outputs the correct $\vec s$ only with exponentially small probability success probability $(1-\tau)^n$. This motivates our following quite simple error model.

\begin{definition}[LSN Error Model]
\label{def:error_model}
Let $\tau \in \mathbb{R}$ with $0 \leq \tau \leq \frac 1 2$. Upon measuring the first $n$ qubits of $Q^{\Simon}_{f}$, our quantum device outputs with probability $1-\tau$ some uniformly random $\vec y \in {\vec s}^{\perp}$, and with probability $\tau$ some uniformly random $\vec y \in \F_2^n \setminus {\vec s}^{\perp}$. That is, the output distribution is
\begin{equation}
\label{eq:LSN}
  \Pr[Q^{\Simon}_{f} \textrm{ outputs } \vec y] =
\begin{cases}
\frac{1-\tau}{2^{n-1}} & \textrm{if } \vec y \in {\vec s}^{\perp} \\
\frac{\tau}{2^{n-1}} & \textrm{else}
\end{cases}\;.
\end{equation}
We call $\tau$ the {\em error rate} of our quantum device. We call the problem of computing~$\vec s$ from the distribution in \cref{eq:LSN} {\em Learning Simon with Noise} (LSN). We further refine \LSN\  in \cref{def:LSN}.
\end{definition}

In the subsequent \cref{sec:ibmq} we show that the results of our \IBMQ\ implementation only roughly follows the LSN Error Model of \cref{def:error_model}. However, we also introduce in \cref{sec:smooth} simple smoothing techniques such that the \IBMQ\ measurements can be transformed into almost perfectly matching our error model.

Notice that intuitively there is no efficient way to tell whether $\vec y \in {\vec s}^{\perp}$. This intuition is confirmed in \cref{sec:reduction}, where we show that solving LSN is tightly as hard as solving the Learning Parity with Noise (LPN) problem.

%
%

\section{Quantum Period Finding on \IBMQ}
\label{sec:ibmq}
We ran our experiments on the \IBMQ\ Melbourne device, which (despite its name) realizes $15$-qubit circuits. 
Let us number \IBMQ's qubits as $0, \ldots, 14$. Our implementation goal was to realize quantum period finding for Simon functions $f: \F_2^n \rightarrow \F_2^n$ with error rate as small as possible. To this end we used the following optimization criteria.

\paragraph{Gate count.} \IBMQ\ realizes several $1$-qubit gates such as Hadamard and rotations, but only the $2$-qubit gate $\cnot$. On \IBMQ, the application of any gates introduces some error, where especially the $2$-qubit $\cnot$ introduces approximately as much error as ten $1$-qubit gates (see \cref{fig:IBM_topo}). Therefore, we introduce a circuit norm that defines a weighted gate count, which we minimize in the following.

\begin{definition}
\label{def:cnnorm}
	Let $Q$ be a quantum circuit with $g_1$ many 1-qubit gates and $g_2$ many 2-qubit gates.
	Then we define $Q$'s {\em circuit-norm} as
	$\CN(Q):=g_1 + 10 g_2.$
\end{definition}

\paragraph{Topology.} \IBMQ\ can only process 2-qubit gates on qubits that are adjacent in its topology graph, see \cref{fig:IBM_topo}. Let $G=(V,E)$ be the undirected topology graph, where node $i$ denotes qubit $i$.

\begin{figure}[htb]
	\begin{center}
		\includegraphics[width=\textwidth]{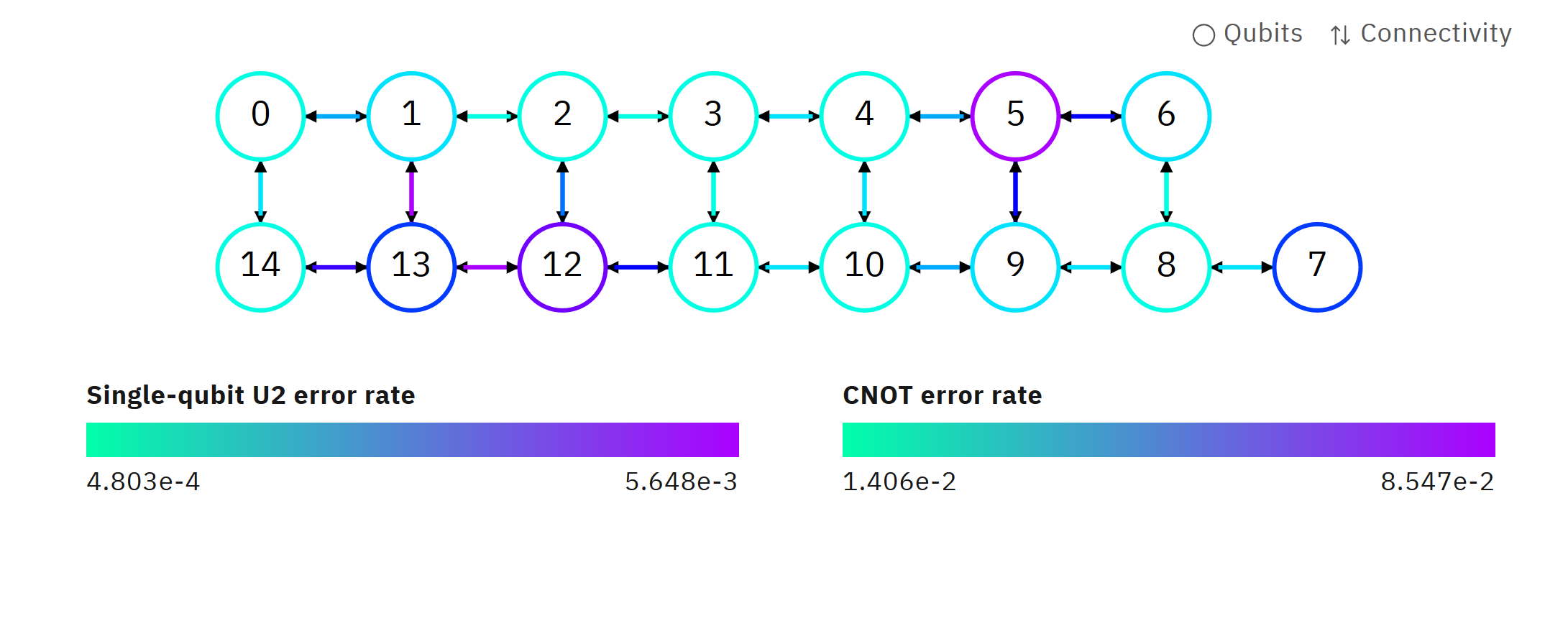}
	\end{center}
	\vspace*{-3em}
	\caption{Topology graph $G(V,E)$ of \IBMQ.}\label{fig:IBM_topo}
\end{figure}
\begin{wrapfigure}{r}{0.5\textwidth}
	\centering
	\vspace*{-2.5em}
	\includegraphics[width=.5\textwidth]{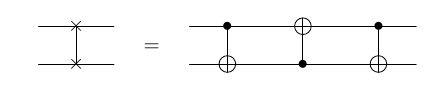}
	\caption{Realisation of \textbf{swap} via 3 $\cnot$s.}\label{fig:SWAP}
	\vspace*{-1em}
\end{wrapfigure}

If $\{u,v\} \in E$ then we can directly implement $\cnot(u, v)$, respectively $\cnot(v, u)$, where $u$, respectively $v$, serves as the control bit. Hence, we call qubits $u, v$ {\em adjacent} iff $\{u, v\} \in E$.

Let us assume that we want to realize $\cnot(1, 3)$ in our algorithm. Since $\{1,3\} \notin E$ we  may first swap the contents of qubits $2$ and $3$ by realizing a $\mathbf{swap}$ gate via 3 $\cnot$s as depicted in \cref{fig:SWAP}. Thus, with a total of 3 $\cnot$s we swap the content of qubit $3$ into $2$. Since $\{1,2\} \in E$, we may now apply $\cnot(1, 2)$.

\subsection{Function Choice}
\label{sec:fchoice}

	Notice that in \cref{def:simon} of Simon's problem, we obtain oracle access to a Simon function $f$.
	In a quantum-CPA attack we assume that a cryptographic function $f$ is realized via its quantum embedding $U_f$. An attacker gets black-box access to $U_f$, i.e. he can query $U_f$ on inputs of his choice in superposition.  
	
	We choose the following function $f_{\vec s}$ whose $U_{f_{\vec s}}$ is not too expensive to realize on \IBMQ .

\begin{definition}
	\label{def:fchoice}
Let $\vec s \in \F_2^n \setminus \{\vec 0\}$, and let $i \in [0,n-1]$ be the smallest $i$ with $s_i = 1$. We define
\[
  f_{\vec s}: \F_2^n \rightarrow \F_2^n, \quad \vec x \mapsto \vec x + x_i \cdot \vec s.
\]
\end{definition}

Let us first show that $f_{\vec s}$ is indeed a Simon function as given in \cref{def:simon}. Moreover, we show that {\em every}  Simon function -- no matter whether it is efficiently computable or not -- is of the form $f_{\vec s}$ followed by some permutation.

\begin{lemma}
\label{lem:f}
Let $f_{\vec s}(\vec x) = \vec x + x_i \cdot \vec s$ as in \cref{def:fchoice}. Then the following holds.
\begin{enumerate}
	\item[(1)] $f_{\vec s}$ is a Simon function with period $\vec s$, i.e. $f_{\vec s}(\vec x) = f_{\vec s}(\vec y)$ iff $\vec y \in \{\vec x , \vec x + \vec s\}$.
	\item[(2)] Any Simon function is of the form $P \circ f_{\vec s}$ for some bijection $P: \F_2^n \rightarrow \F_2^n$.
\end{enumerate}
\end{lemma}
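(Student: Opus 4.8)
For part (1), the plan is to do a direct case analysis on whether $x_i = y_i$. First I would establish that $f_{\vec s}$ really is a $(2\!:\!1)$ function with period $\vec s$. Note $f_{\vec s}(\vec x + \vec s) = \vec x + \vec s + (x_i + s_i)\vec s = \vec x + \vec s + (x_i + 1)\vec s = \vec x + x_i \vec s = f_{\vec s}(\vec x)$, using $s_i = 1$; this gives the "$\Leftarrow$" direction immediately (the case $\vec y = \vec x$ being trivial). For "$\Rightarrow$", suppose $f_{\vec s}(\vec x) = f_{\vec s}(\vec y)$, i.e. $\vec x + x_i \vec s = \vec y + y_i \vec s$, so $\vec x + \vec y = (x_i + y_i)\vec s$. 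If $x_i = y_i$ the right side is $\vec 0$, so $\vec x = \vec y$; if $x_i \neq y_i$ the right side is $\vec s$, so $\vec y = \vec x + \vec s$. In either case $\vec y \in \{\vec x, \vec x + \vec s\}$, which is exactly \cref{def:simon}. I should also double-check $\vec s \neq \vec 0$ is respected, which holds by the hypothesis of \cref{def:fchoice}.

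For part (2), the plan is: given an arbitrary Simon function $g: \F_2^n \to \F_2^n$ with period $\vec s$, I want to exhibit a bijection $P$ with $g = P \circ f_{\vec s}$. The natural candidate is $P := g \circ f_{\vec s}^{-1}$ where $f_{\vec s}^{-1}$ denotes a right-inverse of $f_{\vec s}$ on its image — but $f_{\vec s}$ is $2\!:\!1$, hence not injective, so I must be careful that $P$ is well-defined and a bijection of all of $\F_2^n$. The key observation is that $f_{\vec s}$ and $g$ induce the \emph{same} partition of $\F_2^n$ into pairs $\{\vec x, \vec x + \vec s\}$ (by part (1) for $f_{\vec s}$, and by \cref{def:simon} for $g$). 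So I would fix a system of coset representatives: pick $i$ with $s_i = 1$ and use $W := \{\vec x \in \F_2^n : x_i = 0\}$, a subspace of size $2^{n-1}$ meeting every coset of $\langle \vec s\rangle$ exactly once. On $W$, the map $f_{\vec s}$ is the identity (since $x_i = 0$), so $f_{\vec s}|_W = \id_W$ and the image of $f_{\vec s}$ is exactly $W$. Likewise $g|_W$ is injective (distinct elements of $W$ lie in distinct pairs), so $g$ maps $W$ bijectively onto $g(W)$, which has size $2^{n-1}$. Now I define $P$ on $W = \mathrm{im}(f_{\vec s})$ by $P(\vec w) := g(\vec w)$, and extend $P$ to a bijection of $\F_2^n$ arbitrarily on the complement $\F_2^n \setminus W$ — any bijection from $\F_2^n \setminus W$ onto $\F_2^n \setminus g(W)$ works, since both sets have size $2^{n-1}$. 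Then for any $\vec x$: if $x_i = 0$, $P(f_{\vec s}(\vec x)) = P(\vec x) = g(\vec x)$; if $x_i = 1$, $f_{\vec s}(\vec x) = \vec x + \vec s \in W$ and $P(f_{\vec s}(\vec x)) = g(\vec x + \vec s) = g(\vec x)$ since $\vec s$ is a period of $g$. Hence $g = P \circ f_{\vec s}$.

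The main obstacle is purely a matter of bookkeeping in part (2): making sure $P$ is a genuine bijection of the whole space and not merely defined on the image of $f_{\vec s}$, and checking the two cases $x_i \in \{0,1\}$ consistently use the period property. There is also a minor subtlety worth a sentence — the choice of $i$ with $s_i = 1$ in \cref{def:fchoice} is immaterial for the statement, since different choices of $i$ give functions $f_{\vec s}$ differing only by a permutation, so it suffices to prove (2) for one fixed choice and absorb any discrepancy into $P$. Everything else is routine linear algebra over $\F_2$.
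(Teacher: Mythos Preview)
Your proposal is correct and follows essentially the same approach as the paper: part (1) is identical (direct computation plus the $x_i = y_i$ versus $x_i \neq y_i$ case split), and part (2) matches the paper's construction of defining $P$ as $g$ on the image $f_{\vec s}(\F_2^n)$ and extending by an arbitrary bijection on the complement. Your version is slightly more explicit in identifying $\mathrm{im}(f_{\vec s}) = W = \{\vec x : x_i = 0\}$ and in splitting the verification into the two cases $x_i \in \{0,1\}$, but the argument is the same.
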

\begin{proof}
	{(\em 1)} We have for all $\vec x \in \F_2^n$ that
	\[
	f_{\vec s}(\vec x + \vec s) = \vec x + \vec s + (\vec x + \vec s)_i \cdot \vec s = \vec x + \vec s + (x_i + 1) \cdot \vec s = \vec x + x_i \cdot \vec s = f_{\vec s}(\vec x).
	\]
	Thus, $f$ has period $\vec s$. It remains to show that $f_{\vec s}$ is $(2:1)$, i.e. that $f_{\vec s}(\vec x) = f_{\vec s}(\vec y)$ implies that $\vec y = \vec x$ or $\vec y = \vec x + \vec s$. From $f_{\vec s}(\vec x) = f_{\vec s}(\vec y)$ we conclude
	\[
	\vec x + x_i \cdot \vec s = \vec y + y_i \cdot \vec s.
	\]
	In the case $x_i=y_i$ this implies $\vec x = \vec y$, whereas in the case $x_i \not= y_i$ this implies $\vec y = \vec x + \vec s$. \medskip

	\noindent {(\em 2)} Let $g$ be an arbitrary Simon function with period $\vec s$. We have to write $g$ in the form $g = P \circ f_{\vec s}$. By ({\em 1}), we know that $f_{\vec s}(\vec x) = f_{\vec s}(\vec y)$ iff $\vec y \in \{ \vec x, \vec x + \vec s\}$. So $f_{\vec s}$ and $g$ already have the same arguments that collide. It remains to map $f_{\vec s}(\vec x)$ to the correct image $g(\vec x)$ via $P$. To this end define the bijection
\[
  P: \F_2^n \rightarrow \F_2^n , \ \vec x \mapsto
    g(\vec x + x_i \vec s).
\]
For all $\vec x \in \F_2^{n}$  we obtain
\begin{align*}
  P \circ f_{\vec s}(\vec x) = P(\vec x + x_i \vec s) & = g(\vec x + x_i \vec s + (\vec x + x_i\vec s)_i \cdot \vec s)  \\
  & = g(\vec x + x_i \vec s + (\vec x + 1)_i \cdot \vec s) = g(\vec x + \vec s) = g(\vec x),
\end{align*}
which implies $g = P \circ f_{\vec s}$.
\end{proof}

\paragraph{Instantiation of Function Choice.} Throughout the paper, we instantiate our function $f_{\vec s}$ with the period $\vec s = (s_{n-1}, \ldots, s_0) = 0^{n-2}11$ and $x_i=x_0$. We may realize $f_{\vec s}$ with $n$ $\cnot$-gates for copying $\vec x$, and an additional $2$ $\cnot$-gates for the controlled addition of $\vec s$ via control bit $0$. See \cref{fig:dida_circ} for an implementation of $f_{\vec s}$ with $n=3$.

\begin{wrapfigure}{r}{0.6\textwidth}
	\centering
	\includegraphics[width=.55\textwidth]{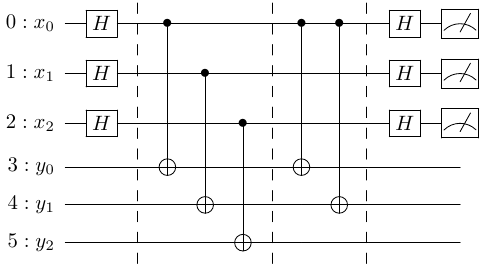}
	\caption{Simon circuit $Q_1$ with our realization of $f_{\vec s}$ and  $\CN(Q_1)=56$. The first 3 $\cnot$s copy $\vec x$, the remaining two $\cnot$s add $\vec s=110$.}\label{fig:dida_circ}
	\vspace*{-3em}
\end{wrapfigure}

Our function choice has the advantage that it can be implemented with only $n+2$ $\cnot$ gates (if we are able to avoid $\mathbf{swap}$s). In addition, we need $2n$ Hadamards for realizing \textsc{Simon}. Thus we obtain a small circuit norm $\CN=10(n+2) + 2n$, which in turn implies a relatively small error on \IBMQ. We perform further circuit norm minimization in \cref{sec:minimize}.

\paragraph{Discussion of our Simple Function Choice.} As shown in \cref{lem:f}, our function $f_{\vec s}$ is general in the sense that any Simon function is of the form $g = P \circ f_s(\vec x)$. However, for obtaining small circuit norm we instantiate our Simon function with the simplest choice, where $P$ is the identity function. In general, we could instantiate non-trivial $P$ via some variable-length PRF with fixed key such as SiMeck~\cite{CHES:YZSAG15}. This would however result in an explosion of the circuit norm and therefore in an explosion of \IBMQ's noise rate~$\tau(n)$.

\begin{wrapfigure}{l}{0.5\textwidth}
	\centering
		\vspace*{-2em}
	\includegraphics[width=.5\textwidth]{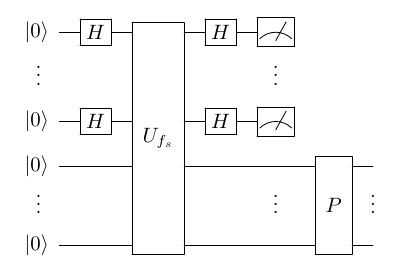}
	\caption{\textsc{Simon} with a general Simon function $P\circ f_{\vec s}$.}\label{fig:circ_with_permutation}
	\vspace*{-1.5em}
\end{wrapfigure}
Thus, \textsc{Simon} with a general Simon function could be implemented as depicted in~\cref{fig:circ_with_permutation}, where the permutation $P$ is quantumly implemented in-place on the last $n$ qubits (with at most one ancilla bit as shown in \cite{ShendePMH03}). 
But already from~\cref{fig:circ_with_permutation} one observes that $P$ does not at all effect the \textsc{Simon} algorithm. In fact, \textsc{Simon} outputs the measurement of the first $n$ qubits, which only depend on which arguments $\vec x, \vec x + \vec s$ collide under $f_{\vec s}$, but {\em not} which function value $f_{\vec s}(\vec x) = f_{\vec s}(\vec x + \vec s)$ they take (which is controlled by $P$).
So, quantumly the choice of a non-trivial $P$ would just unnecessarily increase the error rate $\tau$.

However, we would like to point out that choosing $P$ as the identity function implies that classically extract the period $\vec s$ is {\em not hard}. Notice that $f_{\vec s}(\vec x) \in \{\vec x, \vec x + \vec s\}$. Thus, we may compute $f_{\vec s}(1^{n}) + 1^{n} =\vec s$. The reason that $f_{\vec s}(\vec x)$ classically reveals its period so easily is that the image $\vec x + \vec s$ together with the argument $\vec x$ directly gives us $\vec s$. This correlation between argument $\vec x$ and image $\vec x + \vec s$ is destroyed by a random $P$, which explains why in general period finding classically becomes as hard as collision finding.

However, as explained above, \textsc{Simon} does {\em not profit} from a trivial $P$, since \textsc{Simon} is oblivious to concrete function values.

\subsection{Minimizing the gate count of $f_{\vec s}$}
\label{sec:minimize}

We may implement $f_{\vec s}$ on \IBMQ\ directly as the circuit $Q_1$ from \cref{fig:dida_circ}. Since $Q_1$ uses $6$ Hadamard- and $5$ $\cnot$-gates, we have circuit norm $\CN(Q_1) = 56$, but only when ignoring \IBMQ's topology. As already discussed, \IBMQ\ only allows $\cnot$s between adjacent qubits in the topology graph $G=(V,E)$ of \cref{fig:IBM_topo}.

Thus, \IBMQ\ compiles $Q_1$ to $Q_2$ as depicted in \cref{fig:swap_circ}. Let us check that $Q_2$ realizes the same circuit as $Q_1$, but only acts on adjacent qubits. Let $U_{f_{\vec s}}: \F_2^6 \rightarrow \F_2^6$ be the universal quantum embedding of $f_{\vec s}$ with $(\vec x, \vec y) \mapsto (\vec x, f_{\vec s}(\vec x) + \vec y) = \vec x + x_0 \vec s + \vec y)$. In $U_{f_{\vec s}}$ we first add each $x_i$ to $y_i$ via $\cnot$s, see \cref{fig:dida_circ}. Thus, we have to make sure that each $x_i$ is adjacent to its $y_i$.
Second, we add $\vec s = 011$ via $\cnot$s controlled by $x_0$. Thus, we have to ensure that $x_0$ is adjacent to $y_0$ and $y_1$.

We denote by $i: j$ that qubit $i$ contains the value $j$. This allows us to define the starting {\em configuration} as
\[
  0: x_0 \quad 1:x_1 \quad 2: x_2 \quad 3: y_0 \quad 4:y_1 \quad 5:y_2.
\]
Step 1 of $Q_2$ (see \cref{fig:dida_circ}) performs $\swap(2,3)$ and thus results in configuration
\[
  0:x_0 \quad 1:x_1 \quad 2:y_0 \quad 3:x_2 \quad 4:y_1 \quad 5:y_2.
\]
Step 2 of $C_2$ performs $\swap(1, 2)$ as well as $\swap(4, 3)$. This results in configuration
\[
  0:x_0 \quad 1:y_0 \quad 2:x_1 \quad 3:y_1 \quad 4:x_2 \quad 5:y_2.
\]
Since $\{0,1\},\{2,3\},\{4,5\} \in E$, in Step 3 we now compute $\cnot(0,1)$, $\cnot(2,3)$ and $\cnot(4,5)$.
This realizes the computation of $\vec x + \vec y$.
Eventually, Step 4 of $C_2$ performs $\swap(0, 1)$ and $\swap(2, 3)$ resulting in
\[
  0:y_0 \quad 1:x_0 \quad 2:y_1 \quad 3:x_1 \quad 4:x_2 \quad 5:y_2.
\]
For realizing the addition of $x_i \cdot \vec s = x_0 \cdot 011$, in Step 5 we compute $\cnot(1, 0)$ and $\cnot(1, 2)$ using $\{0,1\},\{1,2\} \in E$.

\begin{figure}[htb]
	\centering
	\vspace*{-1em}
	\includegraphics[width=\textwidth]{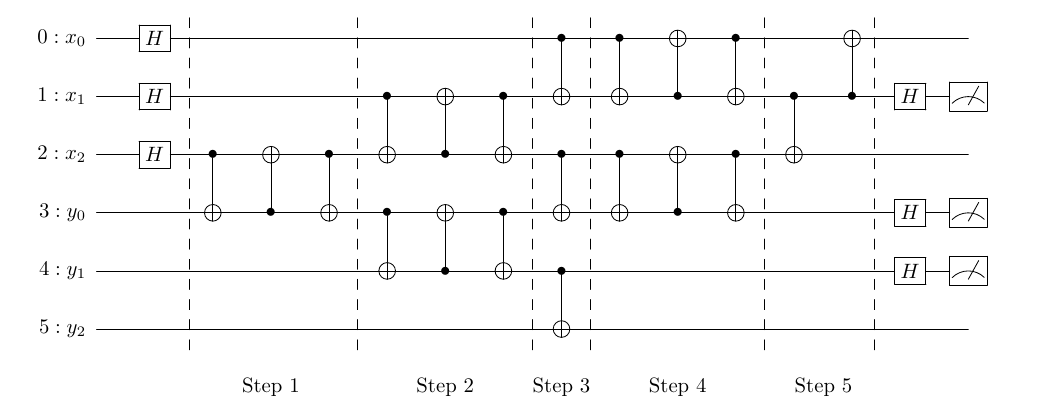}
	\caption{\IBMQ\  compiles $Q_1$ to $Q_2$ with $\CN(Q_2)=206$.}\label{fig:swap_circ}
	\vspace*{-1em}
\end{figure}
\begin{wrapfigure}{r}{0.55\textwidth}
	\centering
	\vspace*{-2em}
	\includegraphics[width=.55\textwidth]{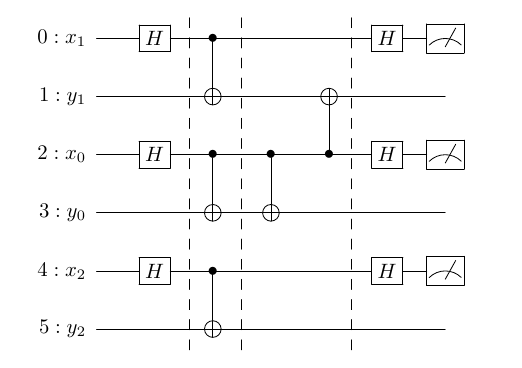}
	\caption{Circuit $Q_3$.}\label{fig:circ:opt_step}
	\vspace*{-1em}
\end{wrapfigure}
In total $Q_2$ consumes six $1$-bit gates and twenty $2$-bit gates and thus has $\CN(Q_2) = 206$, as compared to $\CN(Q_1)=56$.
In the following, our goal is the construction of a quantum circuit that implements $Q_1$'s functionality with minimal circuit norm on \IBMQ.
\begin{wrapfigure}{r}{0.55\textwidth}
	\centering
	\vspace*{1em}
	\includegraphics[width=.5\textwidth]{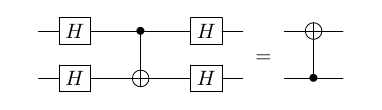}
	\caption{Control bit change.}\label{fig:CNOT}
	\vspace*{-1em}
\end{wrapfigure}

In \cref{fig:circ:opt_step} we start with circuit $Q_3$, for which our optimization eventually results in circuit $Q_4$ (\cref{fig:circ:opt_H}) that can be realized on \IBMQ\ with gate count only $\CN(Q_4)=33$.

From the discussion before, it should not be hard to see that $Q_3$ realizes $Q_{f_{\vec s}}^{\Simon}$, but yet it has to be optimized for \IBMQ.
First of all observe that $\cnot$ is self-inverse, and thus we can eliminate the two $\cnot(2,3)$ gates. Afterwards, we can safely remove qubit 3. The resulting situation for qubits $0, 1, 2$ is depicted in \cref{fig:circ:Optimierung}, where we use a control bit change (see \cref{fig:CNOT}).

\begin{figure}[htb]
	\vspace*{-0em}
	\centering
	\includegraphics[width=\textwidth]{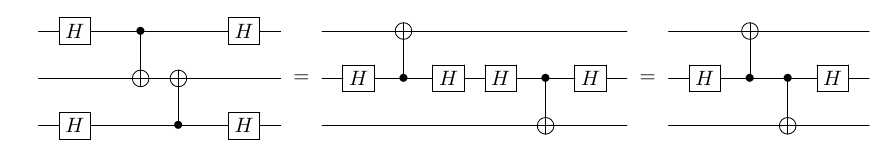}
	\caption{Optimization of $Q_3$.}\label{fig:circ:Optimierung}
	\vspace*{-1em}
\end{figure}

From \cref{fig:circ:Optimierung} we see that the change of control bits from $\cnot(0,1)$, $\cnot(2,1)$ to $\cnot(1,0)$, $\cnot(1,2)$ leads to some cancellation of self-inverse Hadamard 
gates. Moreover, the second
Hadamard of qubit $1$ can be eliminated, since it does not influence the measurement. We end up with circuit $Q_4$ with an optimized gate count of $\CN(Q_4) = 33$.

\begin{figure}[htb]
	\centering
	\includegraphics[]{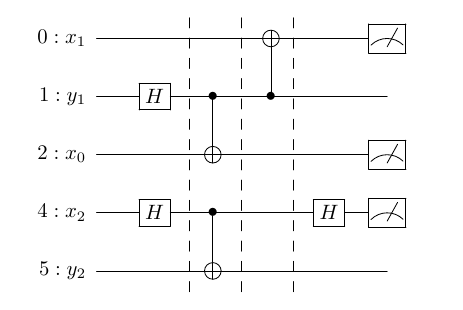}
	\caption{Optimized circuit $Q_4$ on \IBMQ\ with $\CN(Q_4)=33$.}\label{fig:circ:opt_H}
\end{figure}

Since $\{0,1\},\{1,2\},\{4,5\} \in E$, all three $\cnot$s of $Q_4$ can directly be realized on \IBMQ.
Notice that a configuration with optimal circuit norm is in general not unique. For our example, the following configuration yields the same circuit norm as the configuration of $Q_4$:\vspace*{-1em}

\[
	3:y_0 \quad 4:x_0 \quad 5:y_1 \quad 6:x_1 \quad 8:y_2 \quad 9:x_2.
\]

We optimized our \IBMQ\ implementation by choosing among all configurations with minimal circuit norm the one using \IBMQ's qubits of smallest error rate (see \cref{fig:IBM_topo}).
The choice of our configurations is given in Table~\ref{tab:config}, a complete list of optimized circuits of this table can be found in \cref{sec:appendix}, \cref{fig:collection_circuit}.

\newcolumntype{P}[1]{>{\centering\arraybackslash}p{#1}}
\begin{table}[htb]
	\centering
	\scalebox{1}{
	\begin{tabular}{|c|*{16}{P{0.045\textwidth}|}}\hline
		\backslashbox{$n$}{$q$}	&$0$	&$1$	&$2$	&$3$	&$4$	&$5$	&$6$	&$7$	&$8$	&$9$	&$10$	&$11$	&$12$	&$13$	&$14$	&$\mathbf{CN}$  \\\hline
		$2$						&$y_1$	&$x_0$	&$	$	&$	$	&$	$	&$	$	&$y_0$	&		&$	$	&$	$	&$	$	&$	$	&$	$	&$	$	&$x_1$	&$\mathbf{21}$	\\\hline
		$3$						&$y_1$	&$x_0$	&$	$	&$	$	&$	$	&$	$	&$y_0$	&		&$x_2$	&$y_2$	&$	$	&$	$	&$	$	&$	$	&$x_1$	&$\mathbf{33}$	\\\hline
		$4$						&$y_1$	&$x_0$	&$	$	&$	$	&$x_3$	&$y_3$	&$y_0$	&		&$x_2$	&$y_2$	&$	$	&$	$	&$	$	&$	$	&$x_1$	&$\mathbf{45}$	\\\hline
		$5$						&$y_1$	&$x_0$	&$	$	&$	$	&$x_3$	&$y_3$	&$y_0$	&		&$x_2$	&$y_2$	&$x_4$	&$y_4$	&$	$	&$	$	&$x_1$	&$\mathbf{57}$	\\\hline
		$6$						&$y_1$	&$x_0$	&$	$	&$	$	&$x_3$	&$y_3$	&$y_0$	&		&$x_2$	&$y_2$	&$x_4$	&$y_4$	&$x_5$	&$y_5$	&$x_1$	&$\mathbf{69}$	\\\hline
		$7$						&$y_1$	&$x_0$	&$x_6$	&$y_6$	&$x_3$	&$y_3$	&$y_0$	&		&$x_2$	&$y_2$	&$x_4$	&$y_4$	&$x_5$	&$y_5$	&$x_1$	&$\mathbf{81}$	\\\hline
	\end{tabular}
	}
	\vspace*{0.1cm}
	\caption{Table of configurations.}\label{tab:config}
\end{table}

\subsection{Experiments on IBM Q 16}
\label{sec:qexperiments}

For each dimension $n=2, \ldots , 7$ we took $8192$ measurements on \IBMQ\ of our optimized circuits from the previous section. The resulting relative frequencies are depicted in \cref{fig:simon_experiment}. For each $n$, let $S(n)$ denote the set of erroneous measurements in $\F_2^n \setminus \sorth$. Then we compute the error rate $\tau(n)$ as $\tau(n) = \frac{|S(n)|}{8192}$. In \cref{fig:simon_experiment} we draw horizontal lines $\frac{1-\tau(n)}{2^{n-1}}$, respectively $\frac{\tau(n)}{2^{n-1}}$, for the probability distributions of our LSN Error Model for orthogonal, respectively non-orthogonal, vectors.
\vspace*{-1em}
\begin{figure}[H]
	\begin{center}
		\begin{adjustbox}{minipage=\linewidth,scale=0.95}
		\begin{subfigure}[t]{0.49\textwidth}
			\hspace*{-0.2cm}
			\includegraphics[width=\textwidth]{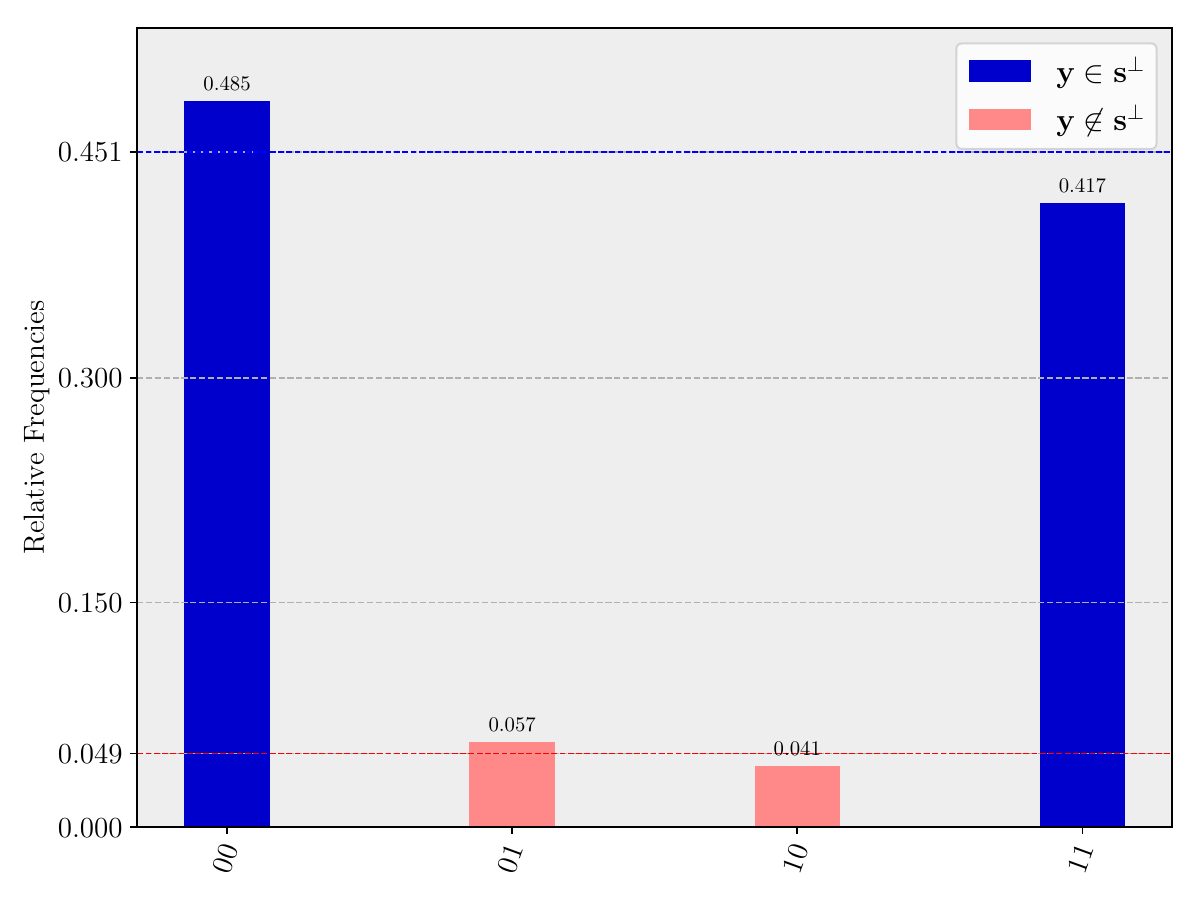}
			\caption{$\tau(2) = 0.099$}
		\end{subfigure}
		\hfill
		\begin{subfigure}[t]{0.49\textwidth}
			\hspace*{-0.2cm}
			\includegraphics[width=\textwidth]{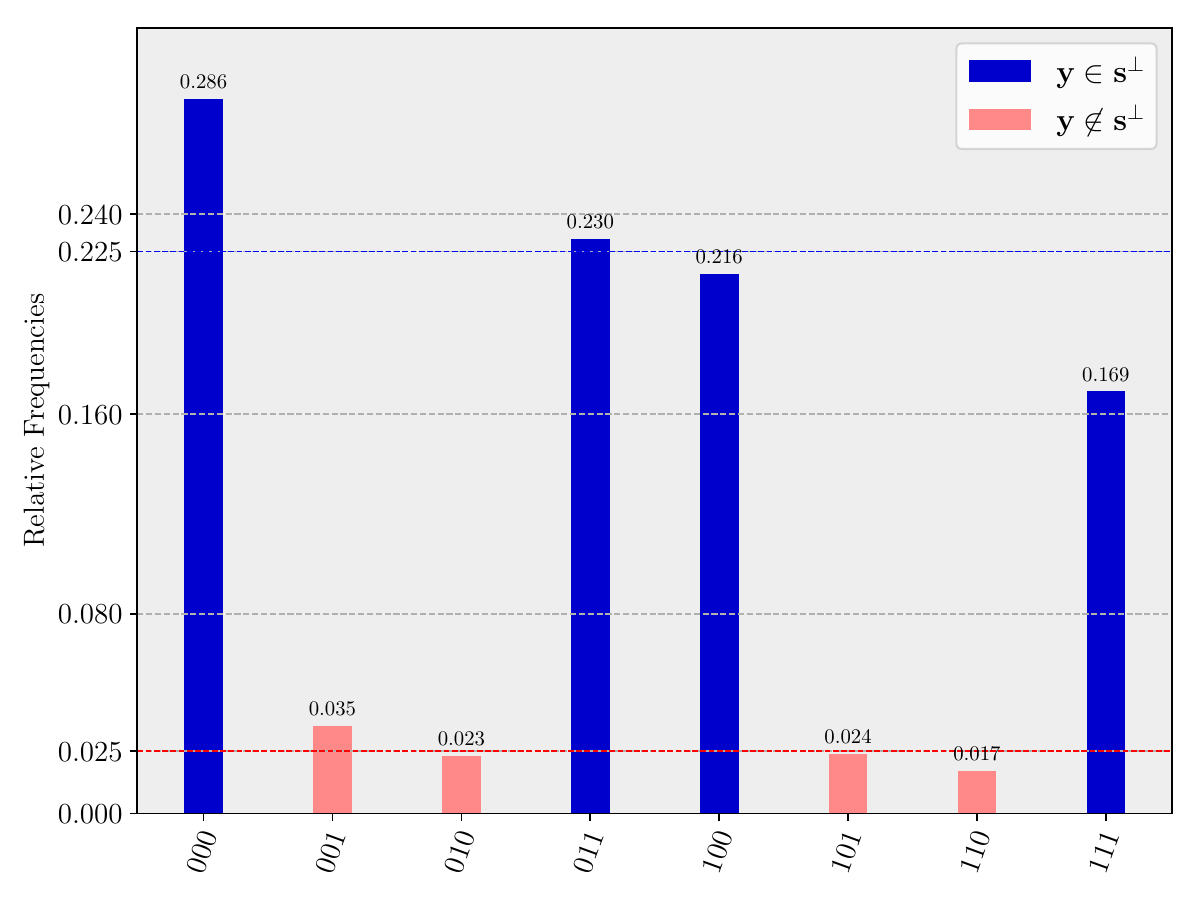}
			\caption{$\tau(3) = 0.098$}
		\end{subfigure}
		\\
		\begin{subfigure}[t]{0.49\textwidth}
			\hspace*{-0.2cm}
			\includegraphics[width=\textwidth]{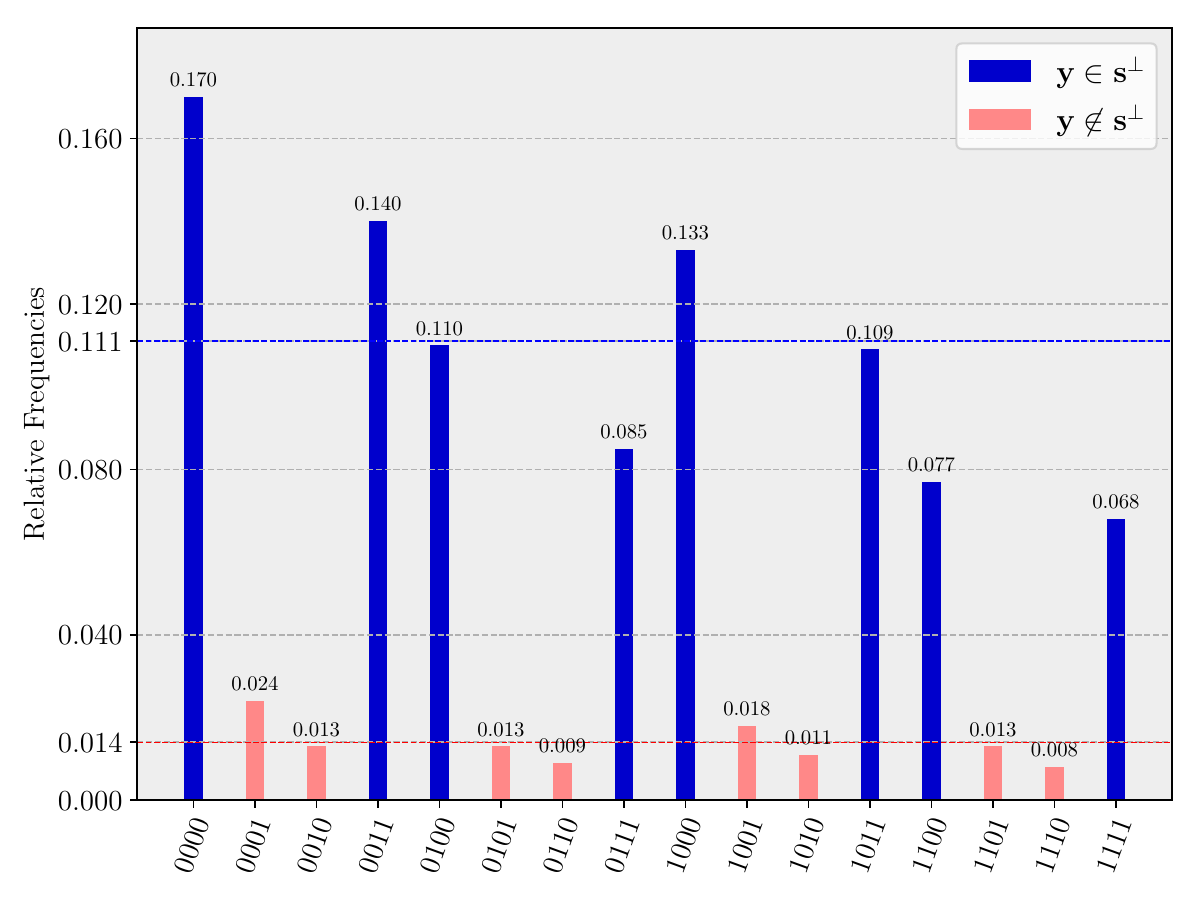}
			\caption{$\tau(4) = 0.102$}
		\end{subfigure}
		\hfill
		\begin{subfigure}[t]{0.49\textwidth}
			\hspace*{-0.2cm}
			\includegraphics[width=\textwidth]{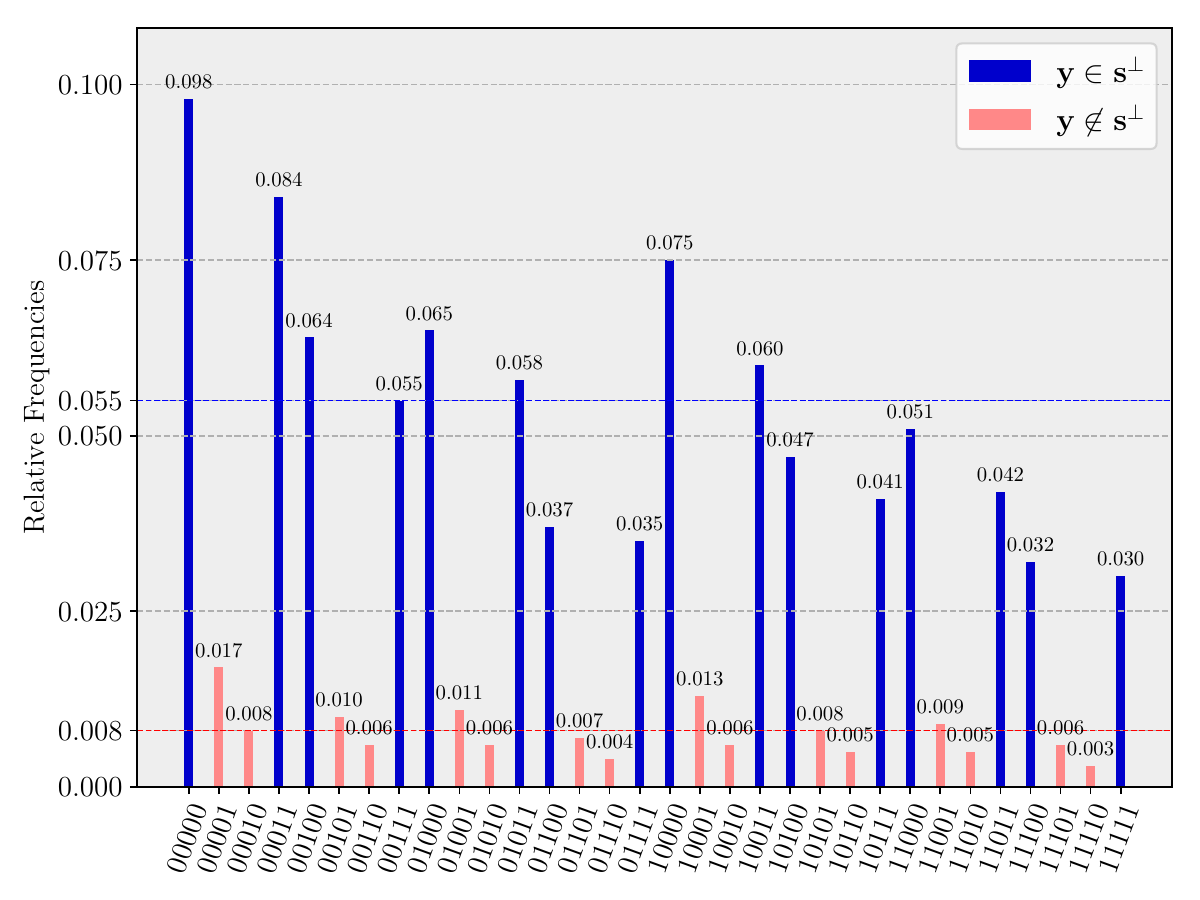}
			\caption{$\tau(5) = 0.107$}
		\end{subfigure}
		\\
		\begin{subfigure}[t]{0.49\textwidth}
			\hspace*{-0.2cm}
			\includegraphics[width=\textwidth]{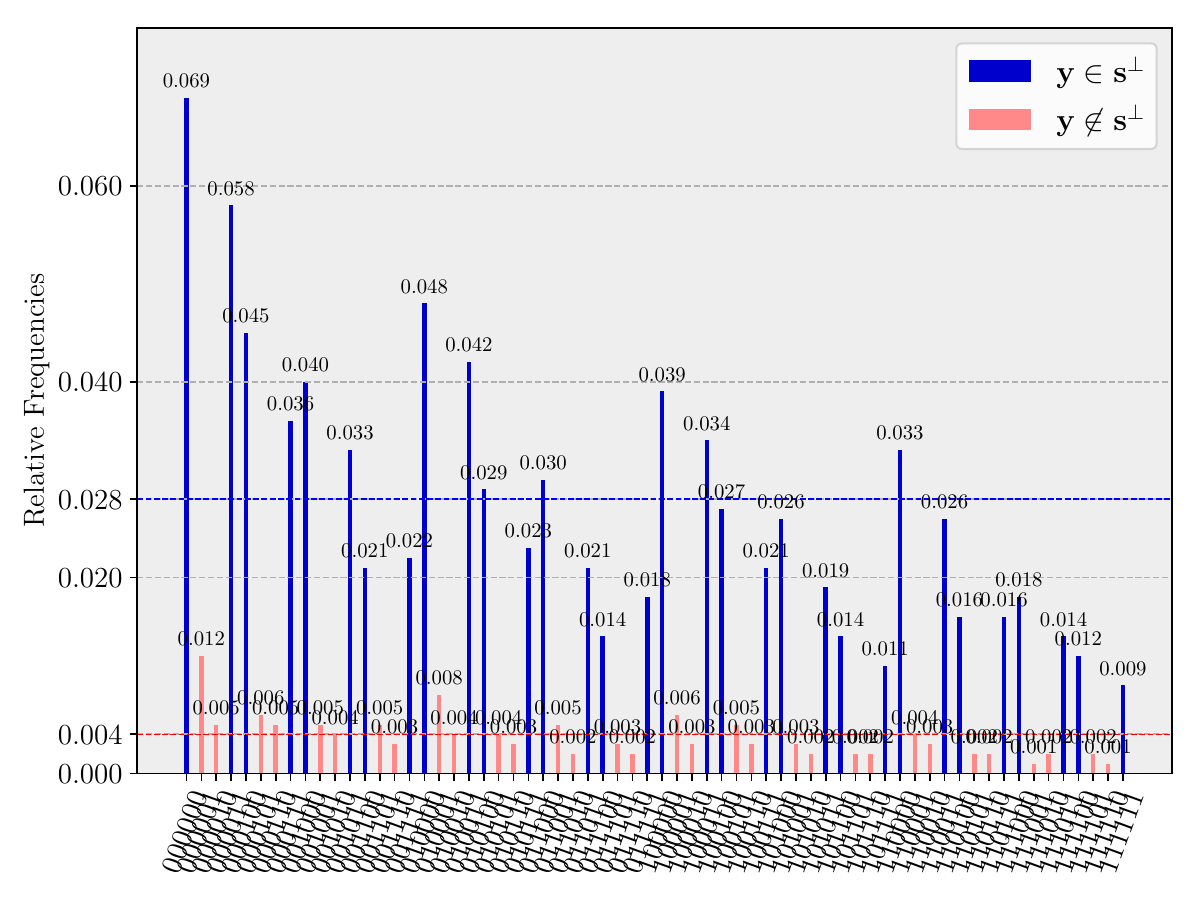}
			\caption{$\tau(6) = 0.112$}
		\end{subfigure}
		\hfill
		\begin{subfigure}[t]{0.49\textwidth}
			\hspace*{-0.2cm}
			\includegraphics[width=\textwidth]{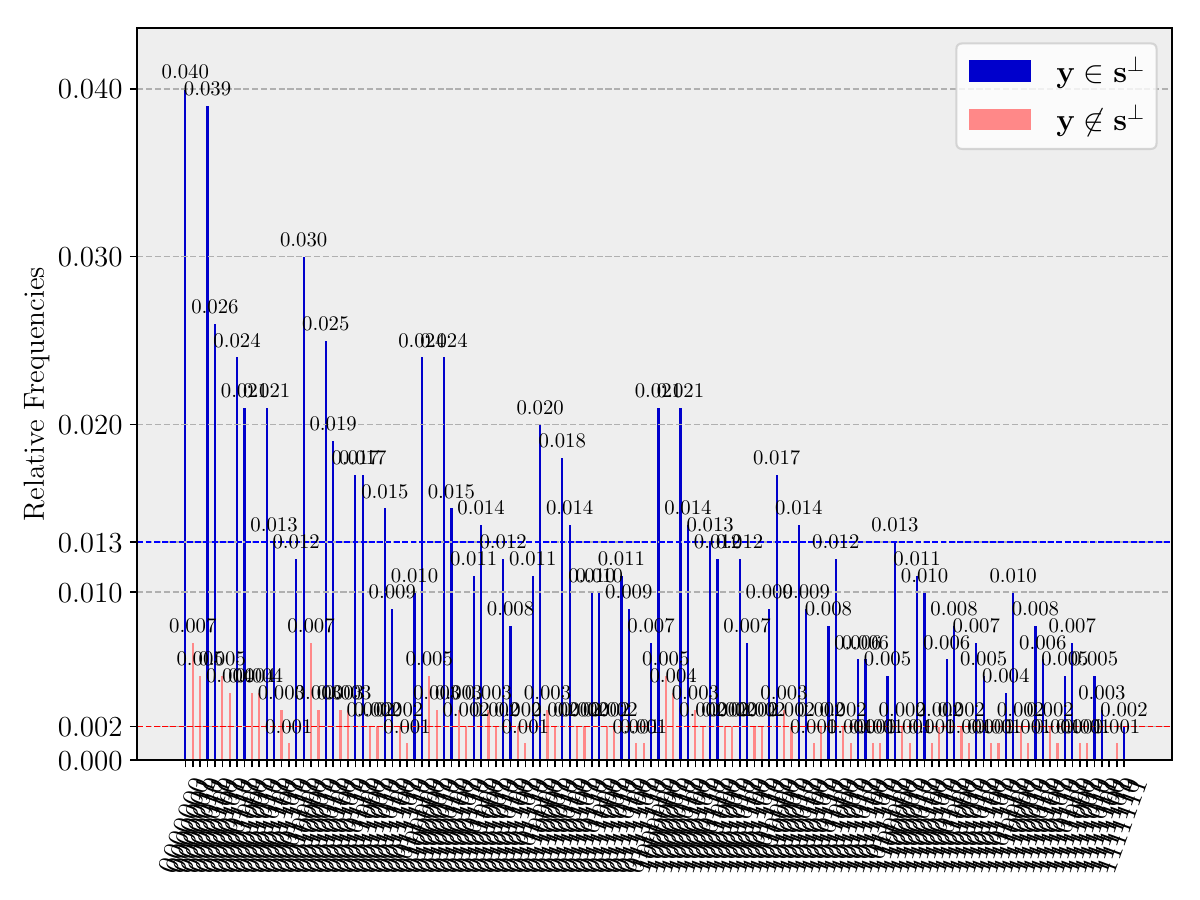}
			\caption{$\tau(7) = 0.117$}
		\end{subfigure}
		\end{adjustbox}
	\end{center}
	\vspace*{-1em}
	\caption{\IBMQ\  measurements of our optimized circuits (see \cref{sec:appendix}, \cref{fig:collection_circuit}).}\label{fig:simon_experiment}
	\vspace*{-1em}
\end{figure}

On the positive side, we observe that vectors in $\vec s^{\perp}$ are much more frequent. Hence,  $\IBMQ$ is noisy, but in principle works well  for period finding. E.g. for $n=3$, we have $\{\vec s\}^{\perp}= \{011\}^{\perp} = \{000, 011,100, 111\}$, and we measure one of these vectors with probability $1- \tau \approx 90 \%$.

On the negative side, we observe the following effects.
\begin{itemize}
\item {\bf Different qubit quality.}
We deliberately ordered our qubits by error rate to make the quality effect visible. Using the \IBMQ\  calibration, we choose lowest error rate for the least significant bit $x_0$ up to highest error rate for the most significant bit $x_{n-1}$ (nevertheless e.g. for $n=4$ it seems that the qubit for $x_2$ performed worse than the one for $x_3$).
\item{\bf Bias towards $\mathbf{0}$.} In \cref{fig:simon_experiment} we ordered our measurements on the $x$-axis lexicographically.
It can be observed that in general measurements with small Hamming weight appear with larger frequencies than large Hamming weight measurements. This indicates a 
bias towards the $\ket{0}$ qubit, which seems to be a natural physical effect since $\ket{0}$ is a non-activated ground state. 

\item {\bf Increasing $\tau(n)$.} The error rate $\tau(n)$ is a function increasing in $n$. This is what we expected, since the circuit norm increases with $n$, and for larger $n$ we also had to include lower quality qubits.
\end{itemize}

\begin{remark}
We experimented with different periodic $f_{\vec s}$, especially more complex than our choice from \cref{def:fchoice}. Qualitatively, we observed similar effects albeit with larger error rates $\tau(n)$. 
\end{remark}

The effects of {\em different qubit quality} and {\em bias towards $0$} obviously violate our LSN Error Model from~\cref{def:error_model}, since they destroy the uniform distribution among orthogonal, respectively non-orthogonal, vectors. 
However, we introduce in the subsequent \cref{sec:smooth} simple smoothing technique that (almost perfectly) mitigate both effects.

\section{Smoothing Techniques}\label{sec:smooth}

Let us first introduce a simple permutation technique that mitigates the {\em different qubit quality}.

\subsubsection{Permutation Technique.} We already saw in \cref{sec:minimize} that configurations for some quantum circuit $C$ with minimal circuit norm are not unique. Let $M$ be the set of configurations with minimal circuit norm, including all permutations of qubits. Then we may perform measurements for circuits randomly chosen from $M$, see \cref{alg:permutation_general}. This approach averages over the qubit quality, while due to its invariant circuit norm preserving the error rate $\tau(n)$.

\begin{algorithm}[htb]
	\DontPrintSemicolon
	\SetAlgoLined
	\SetKwInOut{Input}{Input}\SetKwInOut{Output}{Output}

	\SetKwComment{COMMENT}{$\triangleright$\ }{}

	Let $M:=\{\text{Configurations of $C$ with minimal circuit norm}\}.$\;
	Evaluate $C$ with configurations chosen randomly from $M$.

	\caption{Permutation Technique.}\label{alg:permutation_general}
\end{algorithm}

\noindent {\em Instantiation of $M$ in our experiments.} First we chose a set of of highest quality qubits $\{i_1, \ldots, i_{2n-1}\}$  together with a starting configuration with minimal circuit norm. Let this be

\[
i_1:x_0 \quad i_2:x_1 \quad i_3:y_1 \quad i_4:x_2 
\quad \ldots \quad i_{2n-2}:x_{n-1} \quad i_{2n-1}:y_{n-1}.
\]
We then chose $b \sim {\cal U}$ and a random permutations $\pi$ on $\{2, \ldots, n-1\}$. This gives us circuit-norm preserving configurations
\[
i_1:x_{b} \quad i_2:x_{1-b} \quad i_3:y_1 \quad i_4:x_{\pi(2)} 
\quad \ldots \quad i_{2n-2}:x_{\pi(n-1)} \quad i_{2n-1}:y_{\pi(n-1)}.
\]

We took 50 circuit-norm preserving configurations, and for each we performed $8192$ measurements on \IBMQ .

The experimental results of our Permuation Technique are illustrated for $n=5$ in \cref{subfig:permut}. In comparison, we have in \cref{subfig:normal} the unsmoothed distribution for $8192$ measurements of a single optimal configuration (as in \cref{fig:simon_experiment}). We already see a significant distribution smoothing, especially vectors with the same Hamming weight obtain similar probabilities. But of course, there is still a clear {\em bias towards 0}, which cannot be mitigated by permutations.

\subsubsection{Double-Flip Technique.} To mitigate the effect that vectors with small Hamming weight are measured more frequently than vectors with large Hamming 
\begin{wrapfigure}{r}{0.5\textwidth}
	\centering
	\vspace*{-1.5em}
	\includegraphics[width=.5\textwidth]{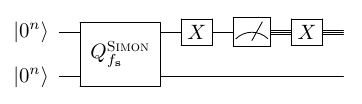}
	\caption{Double-Flip circuit $Q_{DF}$. Triple lines represent classical wires.}\label{circ:double_flip}
	\vspace*{-1em}
\end{wrapfigure}
weight, we flip in Simon's circuit all bits via NOT-gates $X$ before measurement, see~\cref{circ:double_flip}. This flipping inverts the bias towards $0$ that comes from the quantum measurement (not from the previous quantum computation). Since after flipping we  measure the complement, we have to again flip all bits (classically) after measurement and combine them with the original measurements.

\medskip

\noindent {\em Experimental Results and Discussion.} We performed 8192 measurements with circuit $Q_{DF}$ from \cref{circ:double_flip}, the results are illustrated in \cref{subfig:double_flip1}. As expected, we now obtain a bias towards $1$. Hence, in the {\em Double-Flip Technique} we put together the original measurements with $0$-bias from \cref{subfig:normal} and the flipped measurements with $1$-bias from \cref{subfig:double_flip1}, resulting in the smoothed distribution from \cref{subfig:double_flip2}.

From \cref{subfig:double_flip2} we already see that the Double-Flip Technique is quite effective. Moreover, similar to the Permutation Technique, Double-Flip is a general smoothing technique that can be applied for other quantum circuits as well. However, there is also a significant drawback of Double-Flip, since it requires additional (small) quantum circuitry for performing $X$. Thus, as opposed to the Permutation Technique the Double-Flip does not preserve circuit norm. This implies that it slightly increases the error rate $\tau$, as we will see in~\cref{sec:distribution-measure}, where we study more closely the quality of our smoothing techniques.

\begin{figure}[h!]
	\begin{center}
		\begin{adjustbox}{minipage=\linewidth,scale=1}
		\begin{subfigure}[t]{0.49\textwidth}
			\hspace*{-0.2cm}
			\includegraphics[width=\textwidth]{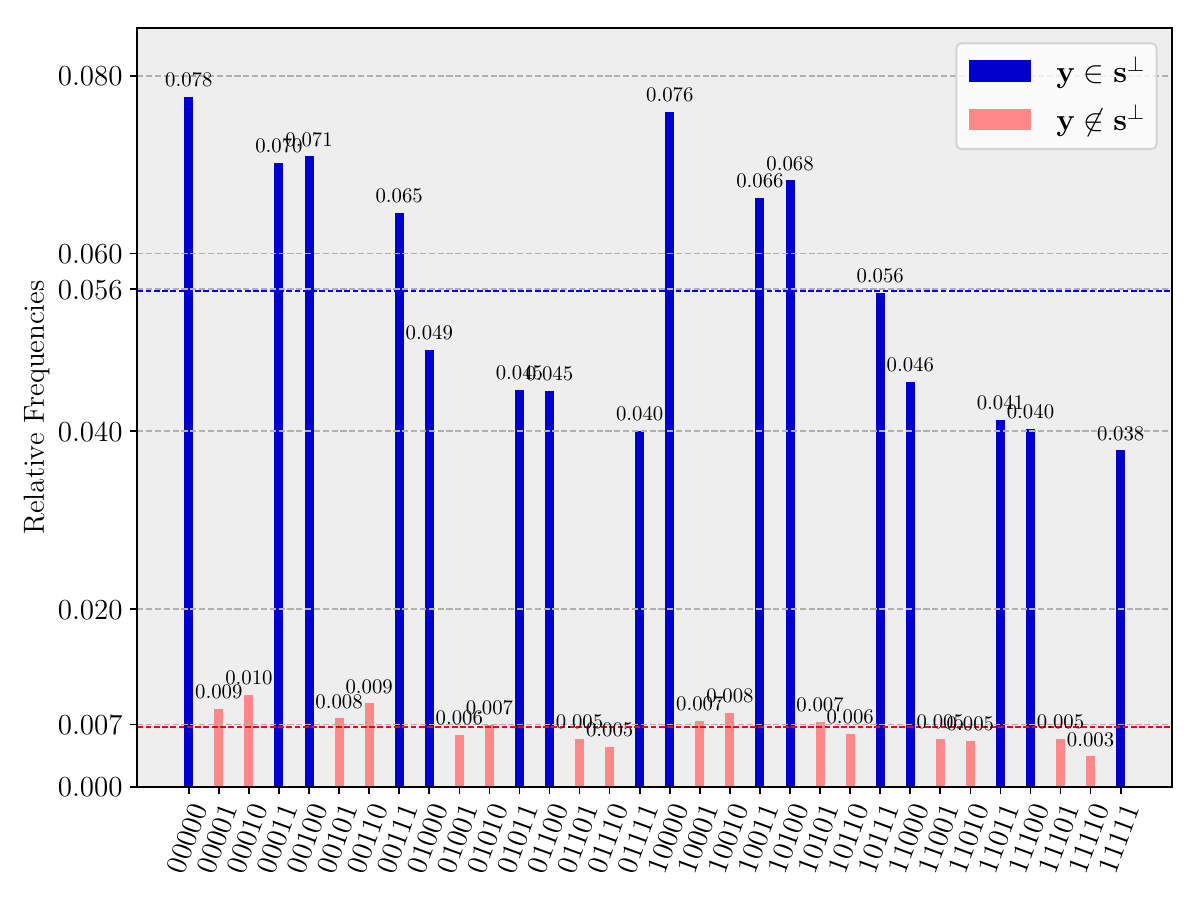}
			\caption{Unsmoothed measurements, n=$5$.\\  \phantom{1}}\label{subfig:normal}
		\end{subfigure}
		\hfill
		\begin{subfigure}[t]{0.49\textwidth}
			\hspace*{-0.2cm}
			\includegraphics[width=\textwidth]{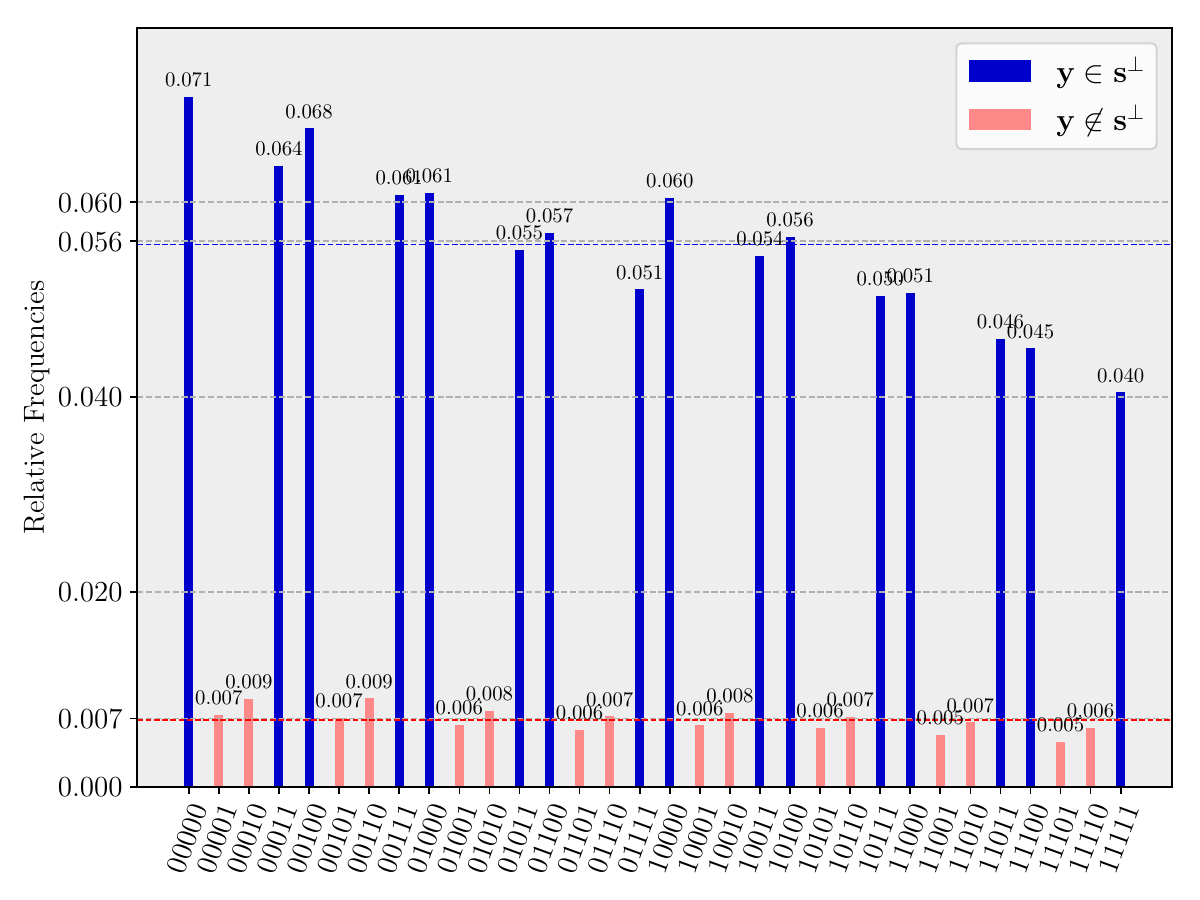}
			\caption{Permutation Technique.\\  \phantom{1}}\label{subfig:permut}
		\end{subfigure}
		\\
		\begin{subfigure}[t]{0.49\textwidth}
			\hspace*{-0.2cm}
			\includegraphics[width=\textwidth]{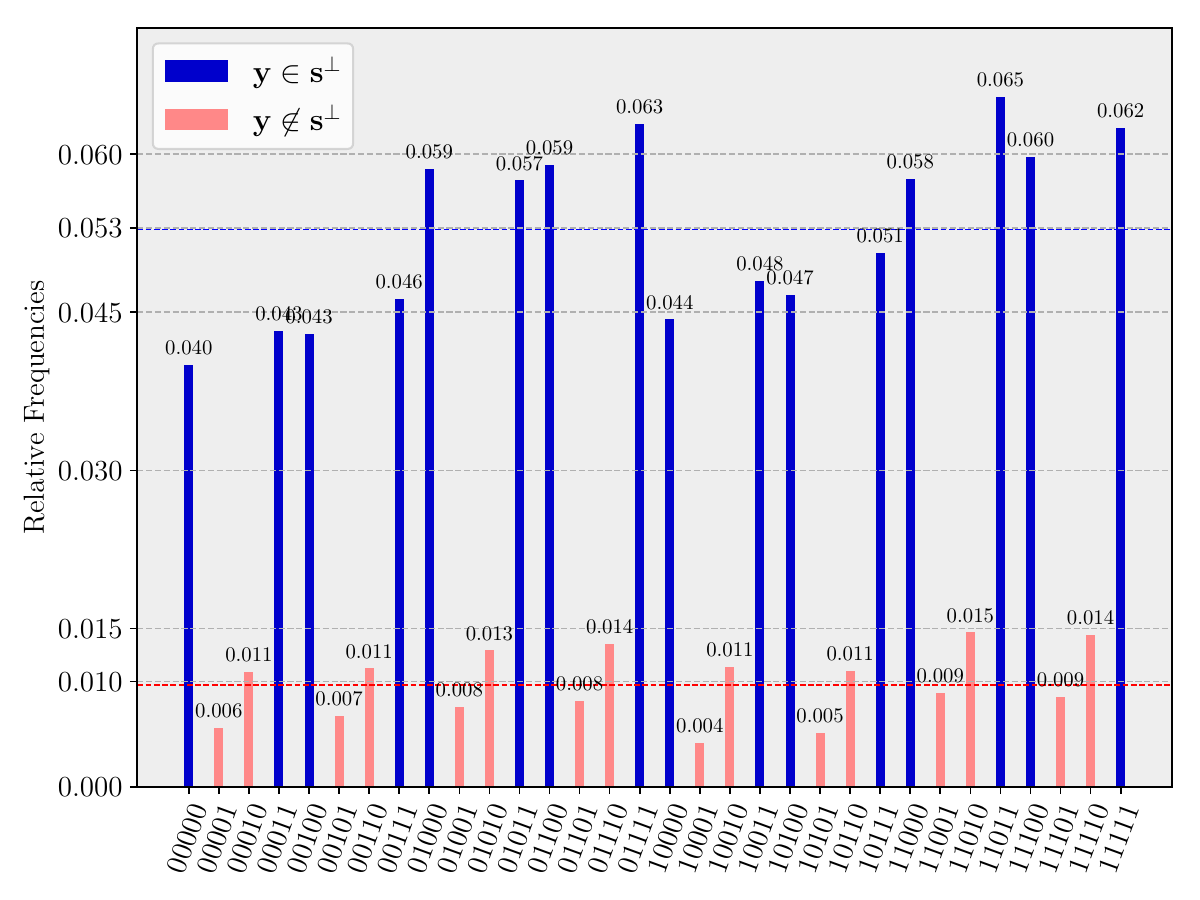}
			\caption{Measurements $Q_{DF}$ (\cref{circ:double_flip}).\\  \phantom{1}}\label{subfig:double_flip1}
		\end{subfigure}
		\hfill
		\begin{subfigure}[t]{0.49\textwidth}
			\hspace*{-0.2cm}
			\includegraphics[width=\textwidth]{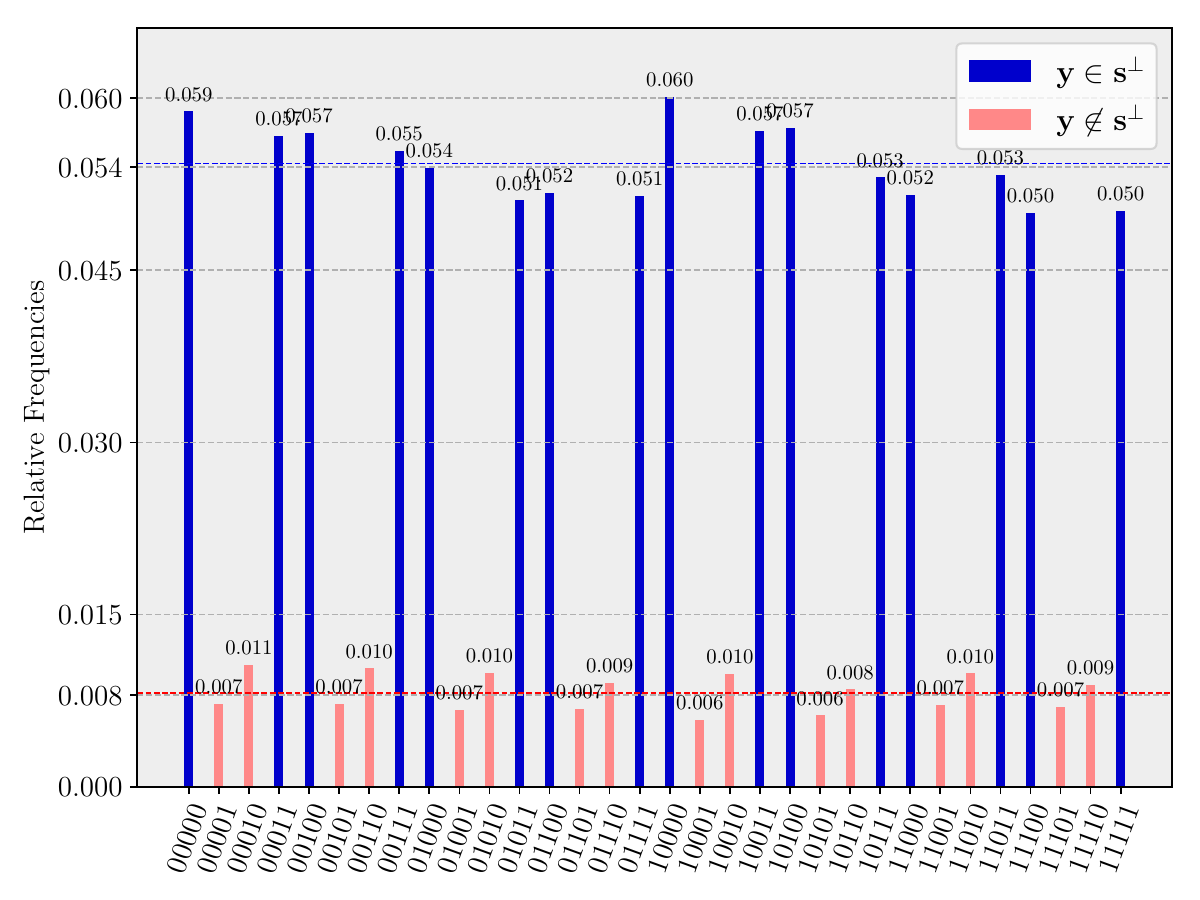}
			\caption{Double-Flip Technique.\\  \phantom{1}}\label{subfig:double_flip2}
		\end{subfigure}
		\\
		\begin{subfigure}[t]{0.49\textwidth}
			\hspace*{-0.2cm}
			\includegraphics[width=\textwidth]{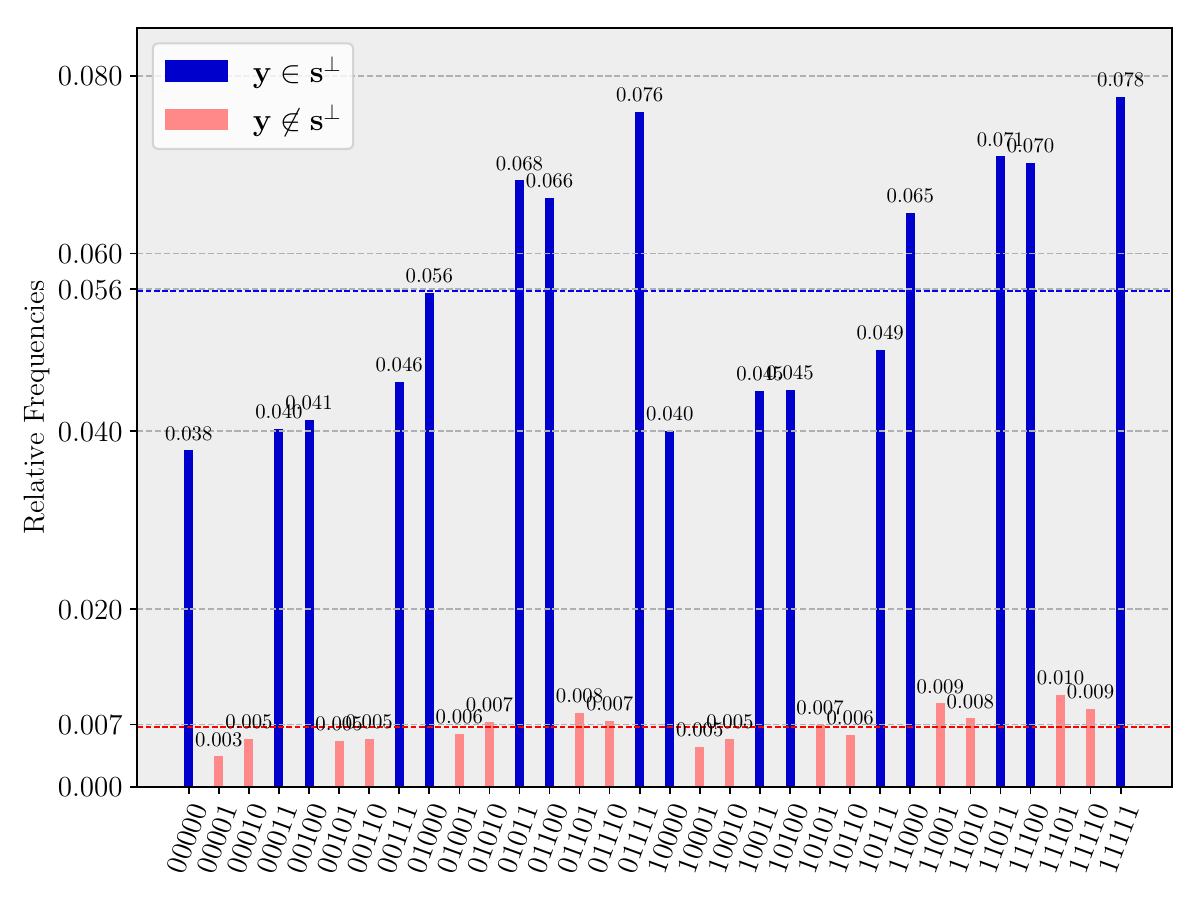}
			\caption{Complemented measurements.\\  \phantom{1}}\label{subfig:hamming1}
		\end{subfigure}
		\hfill
		\begin{subfigure}[t]{0.49\textwidth}
			\hspace*{-0.2cm}
			\includegraphics[width=\textwidth]{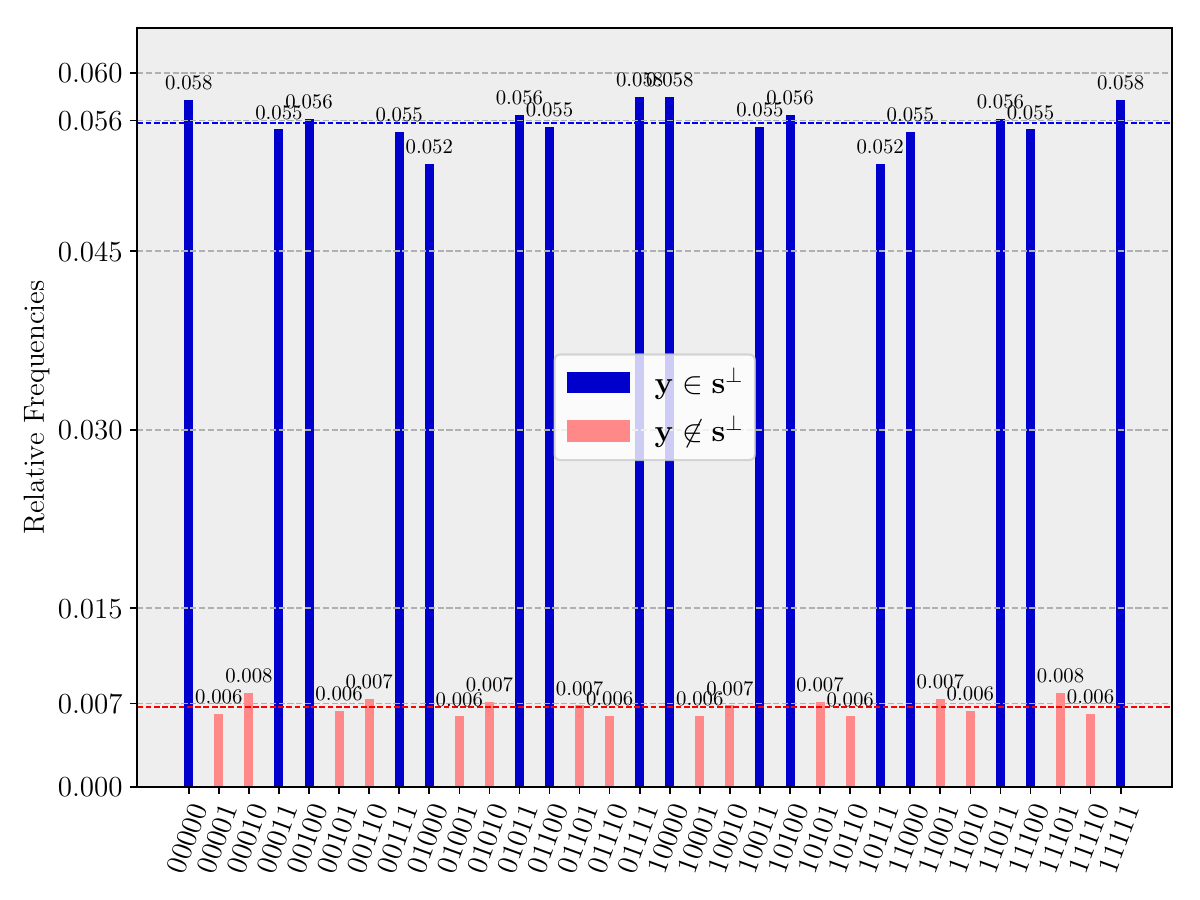}
			\caption{Hamming Technique.\\  \phantom{1}}\label{subfig:hamming2}
		\end{subfigure}
		\end{adjustbox}
	\end{center}
	\vspace*{-1em}
	\caption{Smoothed \IBMQ\ measurements.\\ \hspace*{1cm}}\label{fig:smooth}\vspace*{-3em}
\end{figure}

\subsubsection{Hamming Technique.} The Hamming Technique is similar to the Double-Flip Technique, but as opposed to Double-Flip Hamming is specific to Simon-type problems and a purely classical post-processing of data without adding any additional circuitry. 

Let $Q \subseteq \F_2^n$ be a multiset of quantum measurements, e.g. the set of $8192$ measurements from \cref{subfig:normal}. Then consider the complementary multiset
\[
  \bar Q = \{\vec q + 1^n \mid \vec q \in Q\},
\]
where we flip all bits. Let $\vec q \in Q \cap \vec s^{\perp}$, i.e. $\vec q$ is a measurement in the subspace orthogonal to $\vec s$. By complementing $Q$ we want to preserve orthogonality, i.e. we want to have $\vec q + 1^n \in \vec s^{\perp}$ which is true iff $1^n \in \vec s^{\perp}$ by the subspace structure.

Thus, complementation preserves orthogonality iff $1^n \in \vec s^{\perp}$, which is in turn equivalent to even Hamming weight $h(\vec s)$. Similar to Double-Flip, in the {\em Hamming Technique} we combine both measurements $Q \cup \bar Q$. The Hamming Technique mitigates the effect that for each $\vec q \in Q$ with large frequency (due to the $0$-bias) we also obtain $\vec q + 1^n$ with small frequency (due to the $0$-bias), and vice versa. Thus, averaging both frequencies should smooth our distribution closer to  uniformity.

What happens if $1^n \notin \vec s^{\perp}$? We want to add some vector $\vec v \in \F_2^n$ with Hamming weight as large as possible. It is not hard to see that there always exists some $\vec v \in \vec s^{\perp}$ with $h(\vec v) \geq n-1$. Thus, we can simply try all $n+1$ possible vectors.

\medskip

\noindent {\em Experimental Results.} Since our instantiation of $f_{\vec s}$ from \cref{sec:fchoice} uses even-weight periods $\vec s$, we can use the multiset $\bar Q$ (with $1^n$), which was done in \cref{subfig:hamming1} and is a direct mirroring of $Q$ in \cref{subfig:normal}. The multiset of measurement $Q \cup \bar Q$ is then depicted in \cref{subfig:hamming2}.

In comparison with Double-Flip from \cref{subfig:double_flip2}, we see that the Hamming technique provides in \cref{subfig:hamming2} a distribution which is closer to the uniform distribution among orthogonal and non-orthonal vectors. Thus, for our experimental data one should prefer the Hamming technique over Double-Flip.

\vspace*{-1em}
\subsubsection{Combination of techniques.} The same preference can be observed when we combine the Permutation technique with either Double-Flip (see \cref{subfig:combined1}) or with Hamming (see \cref{subfig:combined2}).

\begin{figure}[h!]
	\vspace*{-1em}
	\begin{center}
		\begin{adjustbox}{minipage=\linewidth,scale=1}
		\begin{subfigure}[t]{0.49\textwidth}
			\hspace*{-0.2cm}
			\includegraphics[width=\textwidth]{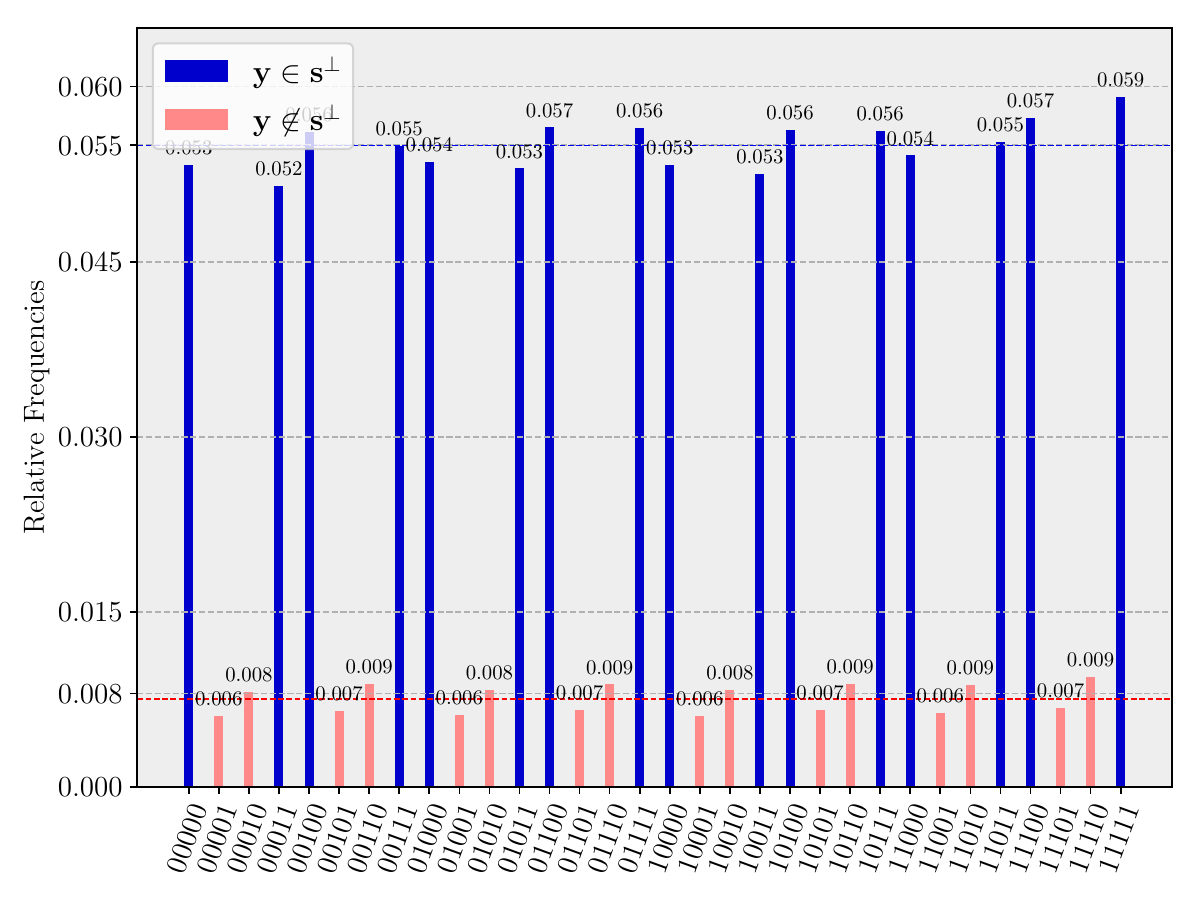}
			\caption{Combined Permutation/Double-Flip}\label{subfig:combined1}
		\end{subfigure}
		\hfill
		\begin{subfigure}[t]{0.49\textwidth}
			\hspace*{-0.2cm}
			\includegraphics[width=\textwidth]{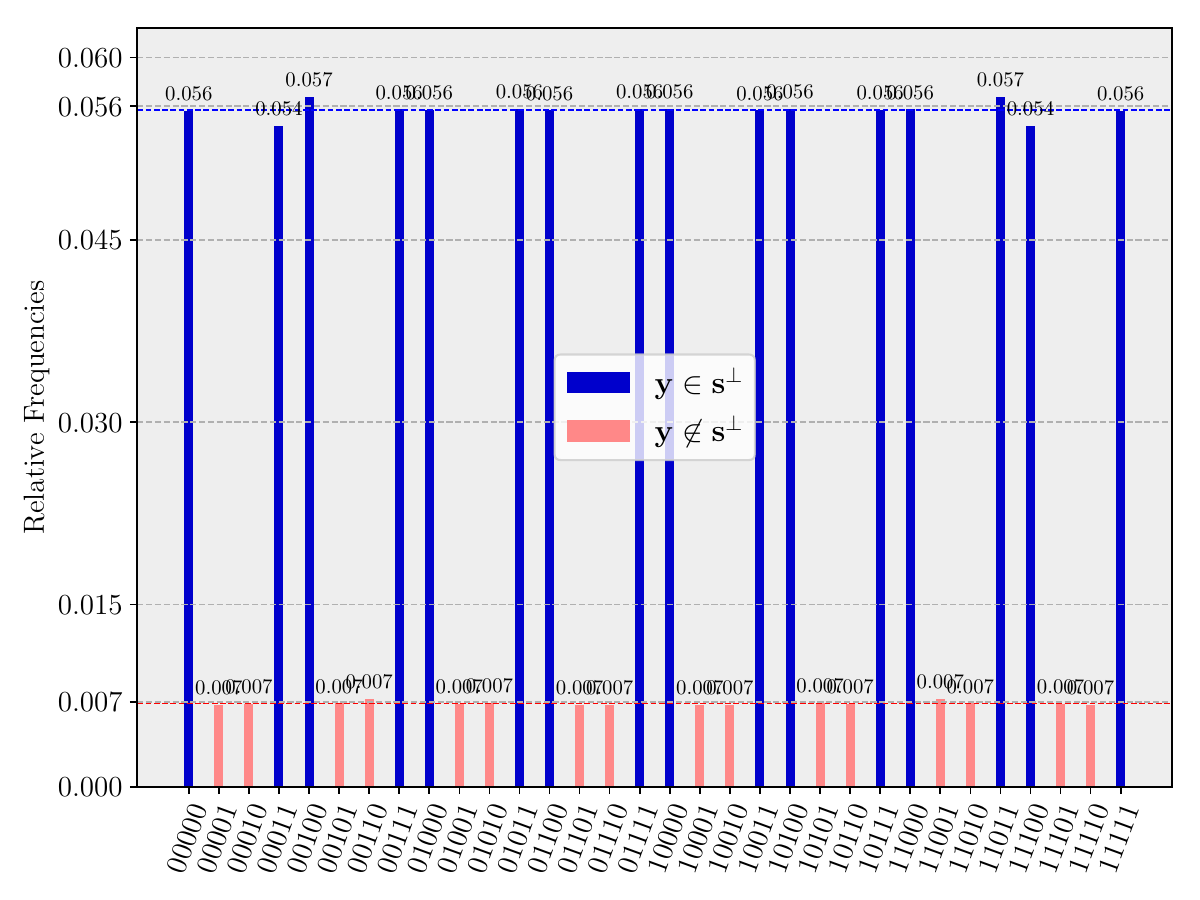}
			\caption{Combined Permutation/Hamming}\label{subfig:combined2}
		\end{subfigure}
		\end{adjustbox}
	\end{center}
	\vspace*{-1em}
	\caption{Smoothing using a combination of techniques.}\label{fig:com_smooth}\vspace*{-0.5em}
\end{figure}

The combination Permutation/Hamming seems to outperform Permutation/ Double-Flip, and Permutation/Hamming almost optimally follows our LSN Error Model from \cref{def:error_model}.

\subsection{Quality Measures Statistics.}
\label{sec:distribution-measure}

Let us introduce a well-known statistical distance that quantitatively measures the effectiveness of our smoothing techniques. Recall that we require error distributions close to our LSN Error Model, in order to justify the proper use of LPN solvers in subsequent sections.

The Kullback-Leibler divergence describes the loss of information when going from a distribution $P$ -- e.g. our LSN Error Model distribution -- to another distribution $Q$ -- e.g. our smoothed \IBMQ\ measurements.

\begin{definition}[Kullback–Leibler divergence (KL)]\label{def:KL}
The Kullback-Leibler divergence  of two probability distributions $P$ towards $Q$ on $\F_2^n$ is
	\[D_\text{KL}(P||Q):=\sum_{y\in\F_2^n}P(Y)\log\left(\frac{P(y)}{Q(y)}\right)\;.\]
\end{definition}

We compute KL and the error rate $\tau$ on the data from  \cref{fig:smooth,fig:com_smooth}. The results are given in \cref{tab:summary}.

\begin{table}[htb]
	\centering
	\begin{tabular}{|*{3}{c|}}\hline
		\backslashbox{Smoothing}{Measure} 	&\text{KL}& $\tau$	\\\hline
		None							&$ \ 0.04644 \ $& $ \ 0.10730 \ $  \\
		Permutation						&$0.01596$&$0.10954$  \\
		Double-Flip						&$0.00600$&$0.13104$  \\
		Permutation/Double-Flip			&$0.00297$&$0.12044$  \\
		Hamming							&$0.00139$&$0.10730$  \\
		Permutation/Hamming				&$0.00011$&$0.10954$  \\\hline
	\end{tabular}
	\vspace*{0.1cm}
	\caption{Kullback–Leibler applied to our Smoothing Techniques.
	}\label{tab:summary}
\end{table}
\vspace*{-1em}

As we would expect for KL, Hamming is more effective than Double-Flip. Also as predicted, Double-Flip increases the error rate $\tau$, whereas the other techniques leave $\tau$ (basically) unchanged. In particular, Hamming leaves $\tau$ unchanged, since it is only a classical post-processing of our quantum data. We have already seen qualitatively in \cref{fig:com_smooth} that the combination Permutation/Hamming performs best. This is supported also quantitatively in \cref{tab:summary}: KL is very close to zero, indicating that via Permutation/Hamming smoothed \IBMQ\ quantum measurements almost perfectly agree with the LSN Error Model.

The results of applying Permutation/Hamming to all $n=2, \ldots, 7$ are depicted in \cref{fig:simon_smooth_experiment}.

\begin{figure}[H]
	\begin{center}
		\begin{adjustbox}{minipage=\linewidth,scale=1}
		\begin{subfigure}[t]{0.49\textwidth}
			\hspace*{-0.2cm}
			\includegraphics[width=\textwidth]{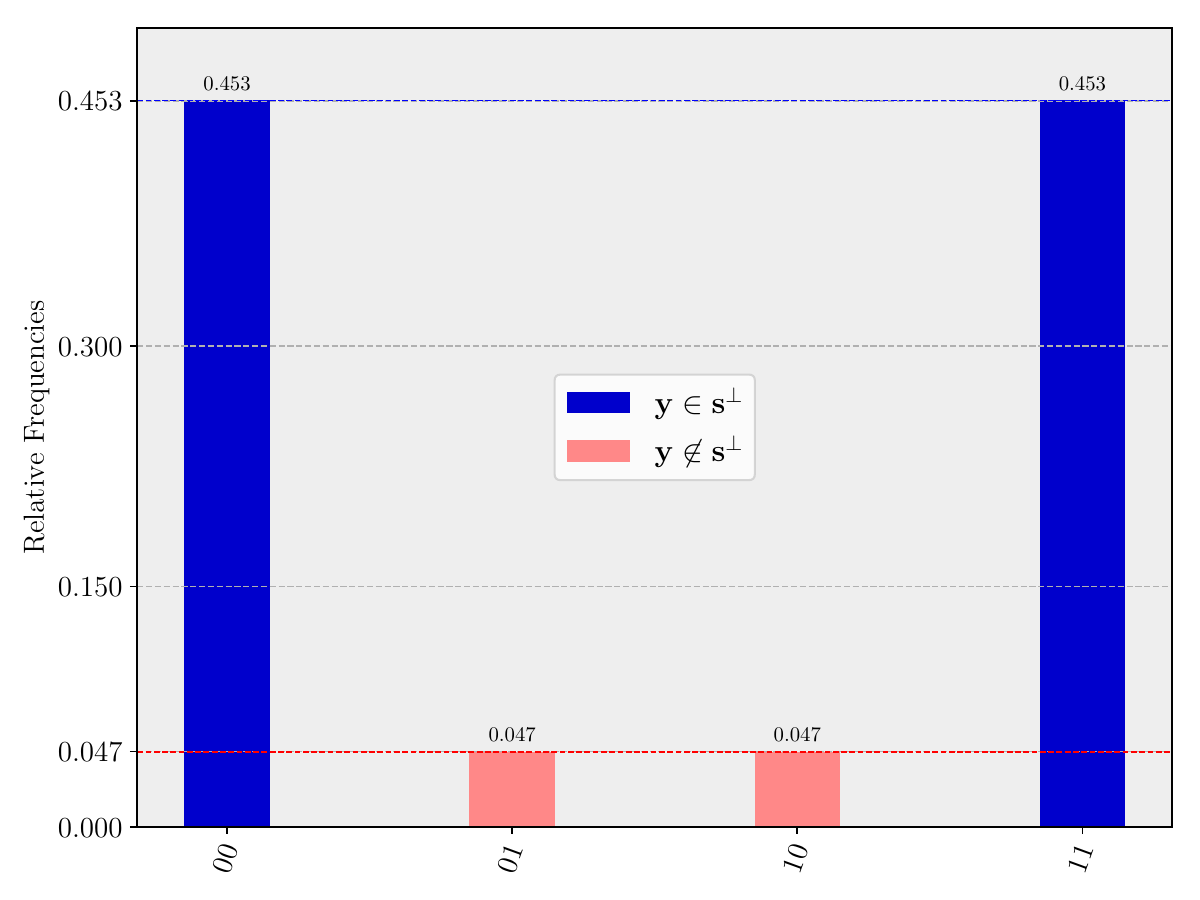}
			\caption{$n=2$, $\tau = 0.09347$, $\text{KL} = 0.00000$.\\  \phantom{1}}
		\end{subfigure}
		\hfill
		\begin{subfigure}[t]{0.49\textwidth}
			\hspace*{-0.2cm}
			\includegraphics[width=\textwidth]{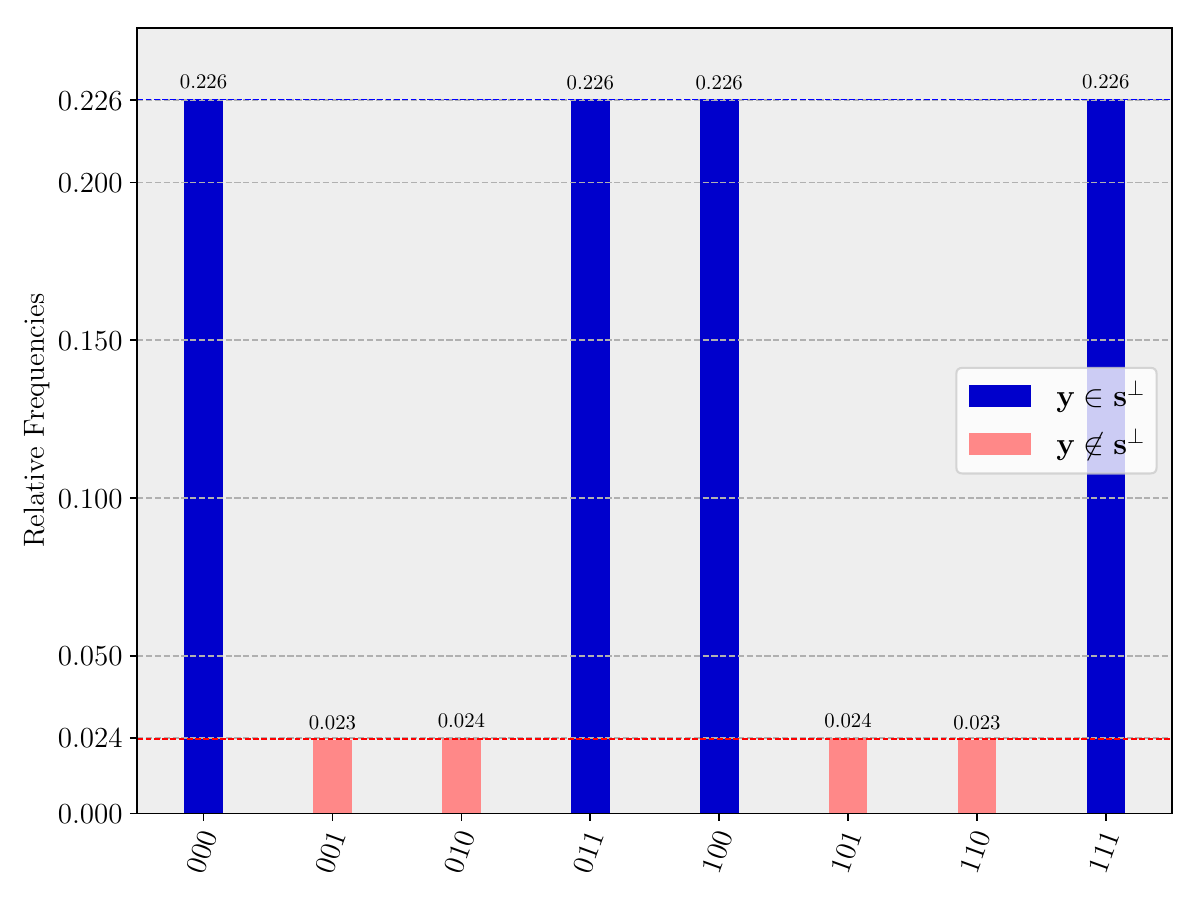}
			\caption{$n=3$, $\tau = 0.09479$, $\text{KL} = 0.00002$.\\  \phantom{1}}
		\end{subfigure}
		\\
		\begin{subfigure}[t]{0.49\textwidth}
			\hspace*{-0.2cm}
			\includegraphics[width=\textwidth]{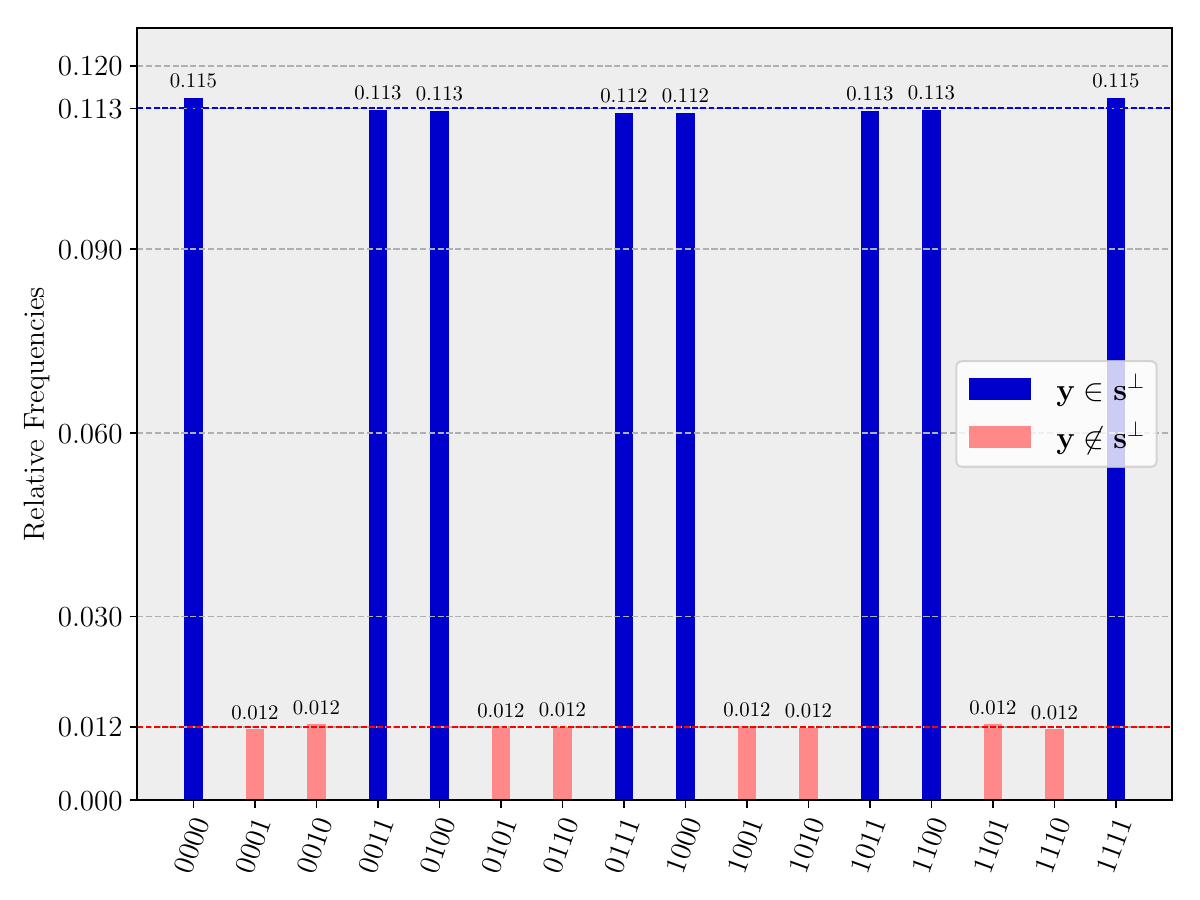}
			\caption{$n=4$, $\tau = 0.09546$, $\text{KL} = 0.00009$.\\  \phantom{1}}
		\end{subfigure}
		\hfill
		\begin{subfigure}[t]{0.49\textwidth}
			\hspace*{-0.2cm}
			\includegraphics[width=\textwidth]{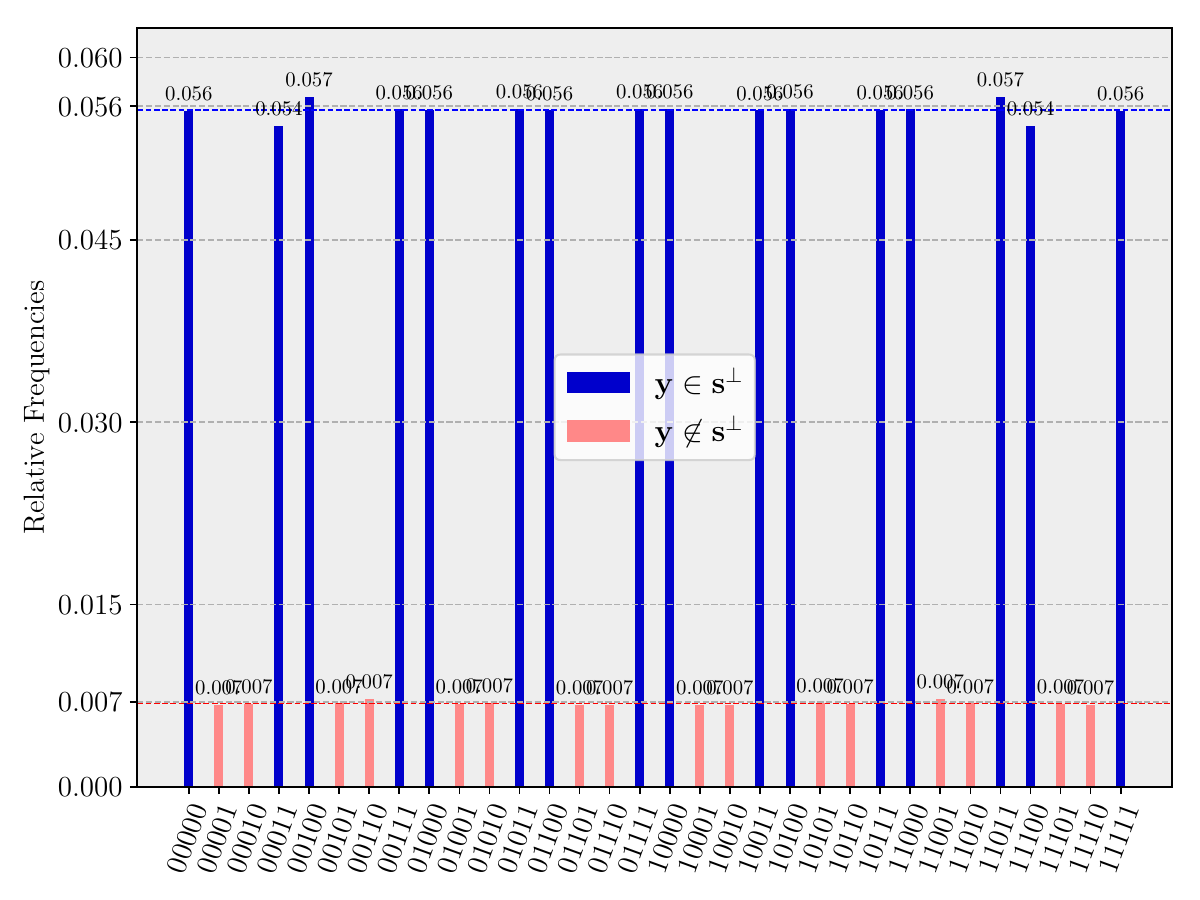}
			\caption{$n=5$, $\tau = 0.10954$, $\text{KL} = 0.00011$.\\  \phantom{1}}
		\end{subfigure}
		\\
		\begin{subfigure}[t]{0.49\textwidth}
			\hspace*{-0.2cm}
			\includegraphics[width=\textwidth]{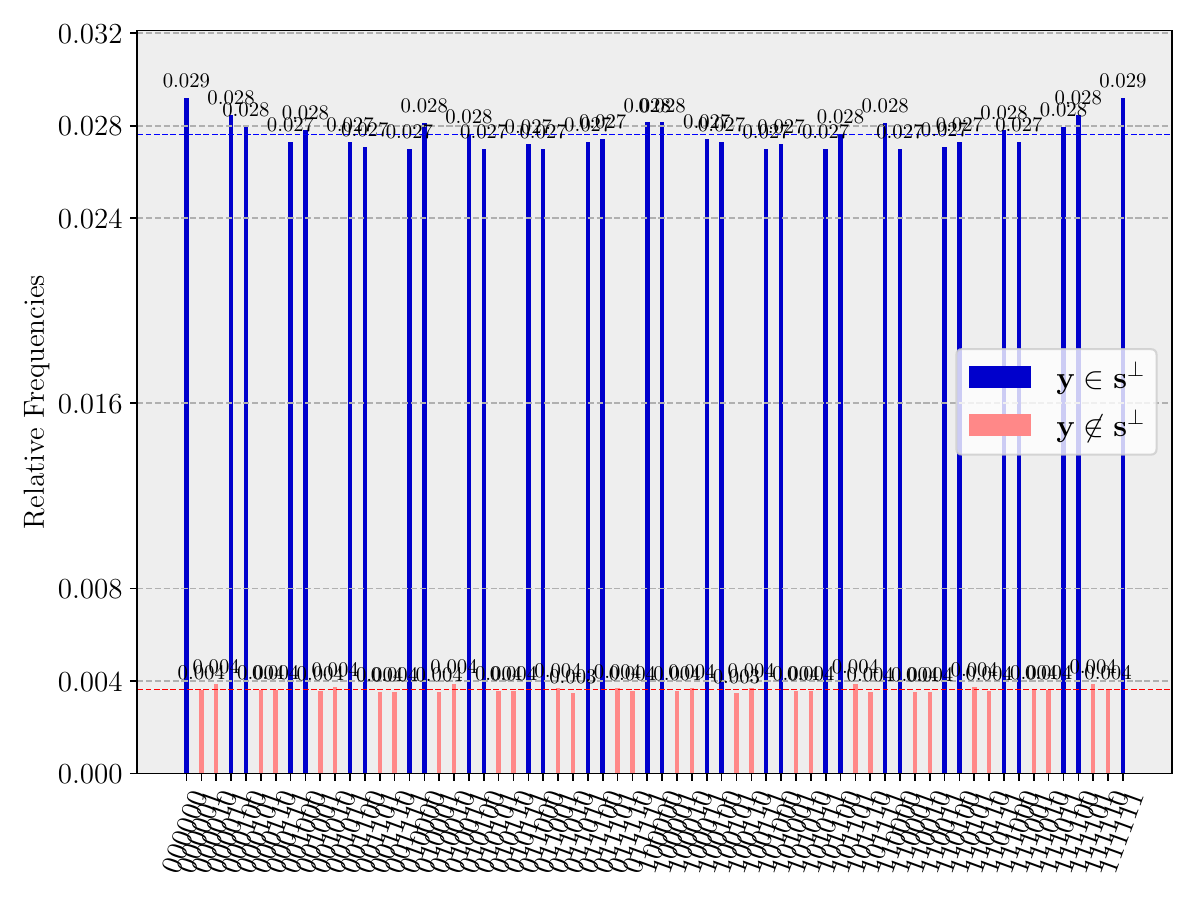}
			\caption{$n=6$, $\tau = 0.11602$, $\text{KL} = 0.00038$.}
		\end{subfigure}
		\hfill
		\begin{subfigure}[t]{0.49\textwidth}
			\hspace*{-0.2cm}
			\includegraphics[width=\textwidth]{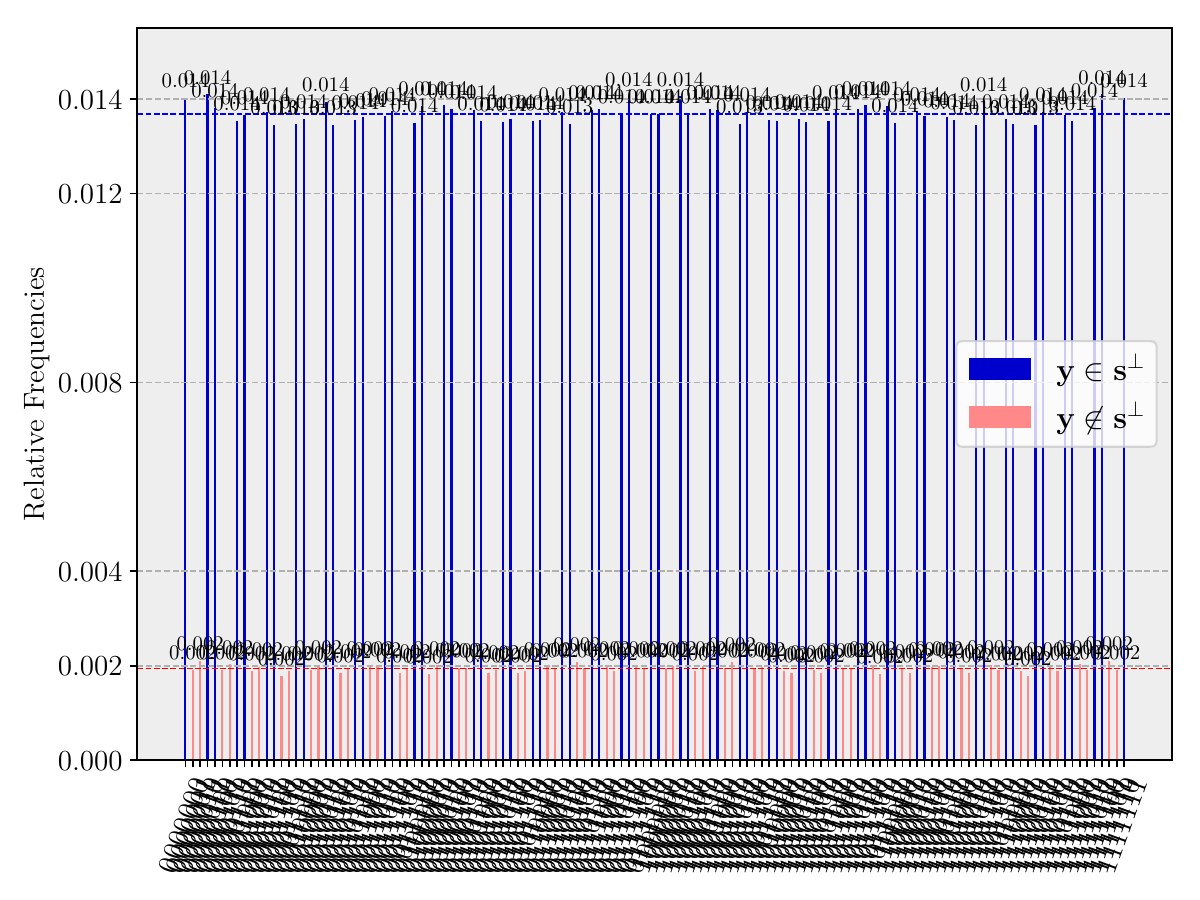}
			\caption{$n=7$, $\tau = 0.12398$, $\text{KL} = 0.00022$.}
		\end{subfigure}
		\end{adjustbox}
	\end{center}
\vspace*{-1em}
	\caption{Via Permutation/Hamming Technique smoothed \IBMQ\ measurements for $n=2, \ldots, 7$. $\text{KL}$ is the Kullback-Leibler divergence to the LSN Error Model distribution.}\label{fig:simon_smooth_experiment}
\end{figure}

\section{\LSN\ is Polynomial Time Equivalent to \LPN}
\label{sec:reduction}

In the previous section, we smoothed our \IBMQ\ experiments to the LSN Error Model (\cref{def:error_model}). Recall that the LSN Error Model states that with probability $\tau$ we measure in the quantum circuit $Q^{\Simon}_{f_{\vec s}}$ some uniformly distributed $\vec y \in \F_2^n \setminus \sorth$. The question is now whether such erroneous $\vec y$ as in our error model can easily be handled, i.e. whether \LSN\ can be efficiently solved.

In this section, we answer this question in the negative. Namely, we show that solving LSN is tightly as hard as solving the well-studied \LPN\  problem, which is supposed to be hard even on quantum computers.

\begin{definition}[\LPN-Problem]\label{def:LPN}
	Let $\vec s \in \F_2^n \setminus \Zero$ be chosen uniformly at random, and let $\tau\in[0,\frac{1}{2})$.
	In the {\em Learning Parity with Noise} problem, denoted $\LPN_{n,\tau}$, one obtains access to an oracle ${\cal O}_{\LPN}(\vec s)$ that provides samples $(\vec a,\langle \vec a,\vec s \rangle + \epsilon)$, where $\vec a \sim {\cal U}_n$ and $\epsilon \sim \mathrm{Ber}_{\tau}$. The goal is to compute $\vec s$.
\end{definition}

\cref{def:LPN} explicitly excludes $\vec s = \vec 0$ in \LPN. Notice that the case $\vec s = \vec 0$ implies that the \LPN\ oracle has distribution $U_n \times \textrm{Ber}_{\tau}$, whereas in the case $\vec s \not= \vec 0$ we have $\Pr_{\vec a}[\langle \vec a , \vec s \rangle=0] = \frac 1 2$ and therefore $\Pr_{\vec a}[\langle \vec a , \vec s \rangle + \epsilon=0] = \frac 1 2$. Hence, for $\vec s \not= \vec 0$ the \LPN\ samples  have distribution $U_n \times U$. This allows us to easily distinguish both cases by a majority test, whenever $\tau$ is polynomially bounded away from $\frac 1 2$. In conclusion, $\vec s= \vec 0$ is not a hard case for \LPN\ and may wlog be excluded.

Let us now define the related {\em Learning Simon with Noise} problem that reflects the LSN Error Model.

\begin{definition}[\LSN-Problem]\label{def:LSN}
	Let $\vec s \in \F_2^n \setminus \Zero$ be chosen uniformly at random, and let $\tau\in[0,\frac{1}{2})$.
	In the {\em Learning Simon with Noise} problem, denoted $\LSN_{n,\tau}$, one obtains access to an oracle ${\cal O}_{\LSN}(\vec s)$ that provides samples $\vec y$, where $\vec y \in \F_2^n$ is distributed as in \cref{def:error_model}, i.e.
\[
\Pr[\vec y] =
\begin{cases}
\frac{1-\tau}{2^{n-1}} & \textrm{, if }  y \in \vec s^{\perp} \\
\frac{\tau}{2^{n-1}} & \textrm{, else}
\end{cases}\; \textrm{ and therefore } \Pr[ \langle \vec y , \vec s \rangle = 0 ] = 1-\tau.
\]
The goal is to compute $\vec s$.
\end{definition}

In the following we prove that $\LSN_{n,\tau}$ is polynomial time equivalent to $\LPN_{n,\tau}$ by showing that we can perfectly mutually simulate ${\cal O}_{\LPN}(\vec s)$ and ${\cal O}_{\LSN}(\vec s)$. The purpose of excluding $\vec s \neq \vec 0$ from $\LPN_{n,\tau}$ is to guarantee in the reduction non-trivial periods $\vec s \neq \vec 0$ in $\LSN_{n,\tau}$.

\begin{theorem}[Equivalence of \LPN\ and \LSN]\label{theo:equivalence}
Let ${\cal A}$ be an algorithm that solves $\LPN_{n,\tau}$ (respectively $\LSN_{n,\tau}$) using $m$ oracle queries in time $T$ with success probability $\epsilon_{\cal A}$. Then there exists an algorithm  ${\cal B}$ that solves  $\LSN_{n,\tau}$ (respectively $\LPN_{n,\tau}$) using $m$ oracle queries in time $T$ with success probability $\epsilon_{\cal B} \geq \frac {\epsilon_{\cal A}} 2$.
\end{theorem}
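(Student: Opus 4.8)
The plan is to prove both directions of the equivalence with one and the same gadget, so I describe the simulation template once and then note that it runs in both directions almost verbatim. Suppose first that ${\cal A}$ solves $\LPN_{n,\tau}$ with $m$ queries, time $T$, success $\epsilon_{\cal A}$, and I want ${\cal B}$ to solve $\LSN_{n,\tau}$ given access to ${\cal O}_{\LSN}(\vec s)$. The secret $\vec s\neq\vec 0$ is fixed throughout, so ${\cal B}$ will just simulate ${\cal O}_{\LPN}(\vec s)$ for the same $\vec s$ on the fly: each time ${\cal A}$ asks for an LPN sample, ${\cal B}$ makes exactly one query to ${\cal O}_{\LSN}(\vec s)$ and post-processes it; this keeps the query count at $m$ and the running time at $T$ up to an additive $\mathcal{O}(mn)$ term. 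The one extra ingredient ${\cal B}$ needs is a vector $\vec v\in\F_2^n$ with $\langle\vec v,\vec s\rangle=1$; since $\vec s\neq\vec 0$, a uniform $\vec v\sim{\cal U}_n$ has this property with probability exactly $\frac{1}{2}$, so ${\cal B}$ simply guesses $\vec v$ once at the start.

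Now the \emph{gadget}. Fix such a good $\vec v$. In the direction LSN$\to$LPN, ${\cal B}$ answers the $i$-th query of ${\cal A}$ by drawing a fresh uniform bit $c_i\sim{\cal U}$, querying ${\cal O}_{\LSN}$ to get $\vec y_i$, and returning the pair $(\vec y_i + c_i\vec v,\ c_i)$. Writing $\epsilon_i:=\langle\vec y_i,\vec s\rangle$, so that $\epsilon_i\sim\mathrm{Ber}_\tau$ and $\vec y_i$ is uniform on the coset $\{\vec z:\langle\vec z,\vec s\rangle=\epsilon_i\}$, one computes that $\vec a_i:=\vec y_i+c_i\vec v$ satisfies $\langle\vec a_i,\vec s\rangle=\epsilon_i+c_i$, hence the returned label equals $c_i=\langle\vec a_i,\vec s\rangle+\epsilon_i$; moreover, conditioned on $\epsilon_i$, the two values of $c_i$ shift the coset of $\vec y_i$ onto the two cosets of $\vec s^\perp$, so $\vec a_i$ is uniform on $\F_2^n$ and independent of $\epsilon_i$. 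Thus $(\vec a_i,c_i)$ is a perfect $\LPN_{n,\tau}$ sample, and the samples are independent across $i$ because the $(\vec y_i,c_i)$ are. The direction LPN$\to$LSN uses the mirror-image gadget: given ${\cal O}_{\LPN}(\vec s)$ and a good $\vec v$, answer each LSN query of ${\cal A}$ by querying ${\cal O}_{\LPN}$ for $(\vec a_i,c_i)$ and returning $\vec y_i:=\vec a_i+c_i\vec v$; the same computation gives $\langle\vec y_i,\vec s\rangle=\eta_i$ where $\eta_i\sim\mathrm{Ber}_\tau$ is the LPN noise, and shows $\vec y_i$ is uniform on its coset with a distribution that does not depend on the observed label $c_i$, so $\vec y_i$ is a perfect $\LSN_{n,\tau}$ sample. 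In both cases ${\cal B}$ runs ${\cal A}$ on the simulated oracle and outputs ${\cal A}$'s guess for $\vec s$.

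Finishing the argument and the \emph{main obstacle}: when the guessed $\vec v$ is good (probability $\frac{1}{2}$), the oracle ${\cal B}$ presents to ${\cal A}$ is distributed exactly like the real one, so ${\cal A}$ — and hence ${\cal B}$ — succeeds with probability $\epsilon_{\cal A}$; when $\vec v$ is bad the simulation is junk but ${\cal B}$ still outputs something, so overall $\epsilon_{\cal B}\geq\frac{1}{2}\epsilon_{\cal A}$, as claimed (if one wished to remove the factor, ${\cal B}$ could re-run with fresh $\vec v$'s and sanity-check ${\cal A}$'s output against its own samples, but this is not needed here). The only genuine difficulty is exactly this: the perfect coset-swapping gadget needs a vector that pairs to $1$ with the \emph{unknown} secret $\vec s$, and there is no way to produce one deterministically — guessing it is precisely what costs the factor $2$. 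The remaining points to be careful about are bookkeeping: that $c_i$ must be a \emph{fresh} coin in the LSN$\to$LPN direction (so the returned pairs are independent and the label carries independent-looking noise), that in the LPN$\to$LSN direction the returned $\vec y_i$ has a distribution independent of the observed $c_i$ (so nothing leaks back), and that excluding $\vec s=\vec 0$ from both problems is what guarantees the simulated instance is a legal one with a well-defined nonzero secret and that a random $\vec v$ is non-orthogonal with probability exactly $\frac{1}{2}$.
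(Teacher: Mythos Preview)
Your proof is correct and takes essentially the same approach as the paper: both directions use the identical gadget of guessing a uniform $\vec v$ (the paper calls it $\vec z$) hoped to satisfy $\langle\vec v,\vec s\rangle=1$, and then mapping samples by $\vec y\mapsto(\vec y+c\vec v,c)$ with fresh $c\sim{\cal U}$ for $\LSN\to\LPN$ and $(\vec a,b)\mapsto\vec a+b\vec v$ for $\LPN\to\LSN$, with the factor $\tfrac12$ coming from the probability that the guess is good. Your distributional checks and bookkeeping remarks match the paper's analysis.
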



\begin{proof}
Assume that we want to solve $\LSN$ via an algorithm ${\cal A}_{\LPN}$ with success probability $\epsilon_{\cal A}$ as in \cref{alg:LPNtoLSN}.

		\begin{algorithm}[htb]
			\DontPrintSemicolon
			\SetAlgoLined
			\SetKwInOut{Input}{Input}\SetKwInOut{Output}{Output}

			\SetKwComment{COMMENT}{$\triangleright$\ }{}
			\Input{
				$n, \tau, {\cal O}_{\LSN}(\vec s), m$
			}
			\Output{
				$\vec s$
			}

			Choose $\vec z \sim {\cal U}_n$. \label{line:z}\;
			\For{$i =1$ to $m$}{
				Set $\vec y_i \leftarrow {\cal O}_{\LSN}(\vec s)$. \;
				Choose $ b_i \sim {\cal U}$. \;
			}
			$\vec s \leftarrow {\cal A}_{\LPN}(n, \tau, (\vec y_1 + b_1\vec z, b_1), \ldots, (\vec y_m + b_m\vec z, b_m))$
			\caption{\textsc{\LPN\ $\Rightarrow$ \LSN}}\label{alg:LPNtoLSN}
		\end{algorithm}

We show in the following that \cref{alg:LPNtoLSN} perfectly simulates the oracle ${\cal O}_{\LPN}(\vec s)$ via ${\cal O}_{\LSN}(\vec s)$ if the vector $\vec z \sim {\cal U}_n$ chosen in \cref{line:z} satisfies $\langle \vec z, \vec s \rangle =1$. Since $\vec s \not= \vec 0$, we have $\Pr_{\vec z}[\langle \vec z, \vec s \rangle =1] = \frac 1 2$. Therefore Algorithm~\ref{alg:LPNtoLSN} succeeds with probability
\[
  \epsilon_{\cal B} \geq \Pr_{\vec z}[\langle \vec z, \vec s \rangle = 1 \cap {\cal A} \textrm{ outputs } \vec s] = \frac {\epsilon_{\cal A}} 2.
\]

Let us now show correctness of \cref{alg:LPNtoLSN}. We first show that the constructed \LPN\ samples $(\vec y + b \vec z, b)$ have the correct distribution. Let $\epsilon = \langle \vec y + b\vec z, \vec s \rangle + b$. Since $\langle \vec z, \vec s \rangle =1$, we have
\[
  \Pr_{\vec y}[\epsilon =1] = \Pr_{\vec y}[\langle \vec y + b \vec z, \vec s \rangle + b = 1] = \Pr_{\vec y}[\langle \vec y, \vec s \rangle + b \langle \vec z, \vec s \rangle + b = 1] = \Pr_{\vec y}[\langle \vec y , \vec s \rangle = 1] = \tau.
\]
It remains to show that $\vec y + b \vec z$ is uniformly distributed. To this end, we show that
\[ p_0 = \Pr_{\vec y, b}[\vec y + b \vec z \mid \langle \vec y, \vec s \rangle = 0] = \frac 1 {2^n}.
\]
Analogous, it follows that  $p_1=\Pr_{\vec y, b}[\vec y + b \vec z \mid \langle \vec y, \vec s \rangle = 1] = \frac 1 {2^n}$. From both statements we obtain
\[
  \Pr_{\vec y, b}[\vec y + b \vec z] = \Pr_{\vec y}[\langle \vec y, \vec s \rangle = 0] \cdot p_0 + \Pr_{\vec y}[\langle \vec y, \vec s \rangle = 1] \cdot p_1 = \frac{1-\tau}{2^n} + \frac{\tau}{2^n} = \frac 1 {2^n},
\]
as desired. It remains to show that
\begin{align*}
p_0 & = \Pr_{\vec y, b}[\vec y + b \vec z \mid \langle \vec y, \vec s \rangle = 0] \\
 & = \Pr_b[b=0] \cdot  \Pr_{\vec y}[\vec y \mid \langle \vec y, \vec s \rangle = 0] + \Pr_b[b=1] \cdot  \Pr_{\vec a}[\vec y + \vec z \mid \langle \vec y, \vec s \rangle = 0] \\
& = \frac 1 2 \left( \frac{1-\tau}{2^{n-1}} + \frac{\tau}{2^{n-1}} \right) = \frac{1}{2^n}.
\end{align*}
This completes the analysis of Algorithm~\ref{alg:LPNtoLSN}.

		\begin{algorithm}[htb]
			\DontPrintSemicolon
			\SetAlgoLined
			\SetKwInOut{Input}{Input}\SetKwInOut{Output}{Output}

			\SetKwComment{COMMENT}{$\triangleright$\ }{}
			\Input{
				$n, \tau, {\cal O}_{\LPN}(\vec s), m$
			}
			\Output{
				$\vec s$
			}

			Choose $\vec z \sim {\cal U}_n$. \;
			\For{$i =1$ to $m$}{
				Set $(\vec a_i, b_i) \leftarrow {\cal O}_{\LPN}(\vec s)$. \;
			}
			$\vec s \leftarrow {\cal A}_{\LSN}(n, \tau, \vec a_1 + b_1\vec z, \ldots, \vec a_m + b_m\vec z)$
			\caption{\textsc{\LSN\ $\Rightarrow$ \LPN}\label{alg:LSNtoLPN}}
		\end{algorithm}

For Algorithm~\ref{alg:LSNtoLPN} we conclude the success probability analogous to the reasoning for Algorithm~\ref{alg:LPNtoLSN}, i.e. we succeed when $\langle \vec z , \vec s \rangle =1$ and ${\cal A}_{\LSN}$ succeeds. So let us assume in the following correctness analysis that we are in the case $\langle \vec z , \vec s \rangle =1$. This implies for the constructed \LSN\ samples $\vec a + b \vec z$ that
\[
\langle \vec a + b \vec z, \vec s \rangle = 0 \Leftrightarrow \langle \vec a , \vec s \rangle + b \langle \vec z , \vec s \rangle = 0 \Leftrightarrow \langle \vec a, \vec s \rangle = b.
\]
Let $\epsilon= \langle \vec a, \vec s \rangle + b$. It follows that
\[
   \Pr_{\vec a, b}[\langle \vec a + b \vec z, \vec s \rangle = 0] = \Pr_{\vec a, b}[ \langle \vec a, \vec s \rangle = b] = \Pr_{\vec a, b}[\epsilon=0] =  1- \tau.
\]
We also have to show that we obtain a uniform distribution among all $\vec a + b \vec z \in \vec s^{\perp}$. This follows from
\begin{align*}
 & \ \Pr_{\vec a, b}[\vec a + b \vec z \mid \langle \vec a + b \vec z, \vec s \rangle = 0 ] = \Pr_{\vec a, b}[\vec a + b \vec z \mid \langle \vec a , \vec s \rangle = b ] \\
  =  & \ \Pr_{\vec a}[\langle \vec a, \vec s \rangle=0] \cdot \Pr_{\vec a, b}[\vec a + b \vec z \mid \langle \vec a , \vec s \rangle = b = 0 ] \ + \\
  & \  \Pr_{\vec a}[\langle \vec a, \vec s \rangle=1] \cdot \Pr_{\vec a, b}[\vec a + b \vec z \mid \langle \vec a , \vec s \rangle = b = 1 ] \\
   = &  \ \frac 1 2 \cdot \Pr_{\vec a}[\vec a \mid \langle \vec a , \vec s \rangle  = 0 ] + \frac 1 2 \cdot \Pr_{\vec a}[\vec a + \vec z \mid \langle \vec a , \vec s \rangle = 1 ] \\
   =  & \ \frac 1 2 \cdot \frac{1}{2^{n-1}} +\frac 1 2 \cdot \frac{1}{2^{n-1}} =  \frac{1}{2^{n-1}}.
\end{align*}

Analogous, we can show that we obtain a uniform distribution among all $\vec a + b \vec z \in\F_2^n\setminus\sorth$. This proves that we perfectly simulate $\LSN$-samples via ${\cal O}_{\LPN}$, and thus shows correctness of \cref{alg:LSNtoLPN}.
\end{proof}

\cref{theo:equivalence} shows that under the $\LPN$ assumption we cannot expect to solve $\LSN$
in polynomial time. However, it does not exclude that quantum measurements that lead to an $\LSN$ distribution are still useful in the sense that they help us to solve period finding faster than on classical computers.
In the following section, we show that $\LSN$ distributed quantum outputs indeed lead to speedups even for large error rates $\tau$.

%
%

\section{Theoretical Error Handling for Simon's Algorithm}
\label{sec:correct}

It is well-known~\cite{MontanaroW16} that period finding for $n$-bit Simon  functions classically requires  time $\Omega(2^{\frac n 2})$. So despite the hardness results of \cref{sec:reduction} we may still hope that even error-prone quantum measurements lead to period finding speedups. Indeed, it is also known that for any fixed $\tau < \frac 1 2 $ the BKW algorithm~\cite{STOC:BluKalWas00} solves $\LPN_{n, \tau}$ --- and thus by \cref{theo:equivalence} also $\LSN_{n, \tau}$ --- in time $2^{\bigO\big(\frac{n}{\log n}\big)}$. This implies that asymptotically the combination of LSN samples together with a suitable LPN-solver already outperforms classical period finding.

In this work, we focus on the LPN-solvers of Esser, Kübler, May~\cite{C:EssKubMay17} rather than the class of BKW-type solvers~\cite{STOC:BluKalWas00,AC:GuoJohLon14,C:KirFou15,C:EHKMS18}, since they have a simple description and runtime analysis, are easy to implement, have low memory consumption, are sufficiently powerful for showing quantum advantage even for large errors $\tau < \frac 12$, and finally they are practically best for the \IBMQ\ error rates $\tau \in [0.09, 0.13]$.

 We start with the analysis of the \textsc{Pooled Gauss} algorithm~\cite{C:EssKubMay17}. \textsc{Pooled Gauss} solves $\LPN_{n, \tau}$ in time  $\tilde{\Theta}\left( 2^{\log\left(\frac{1}{1-\tau}\right)\cdot n} \right)$ using $\tilde{\Theta}\left( n^2 \right)$ samples.

The following theorem shows that period finding with error-prone quantum samples in combination with \textsc{Pooled Gauss} is superior to purely classical period finding whenever the error $\tau$ is bounded by $\tau \leq 0.293$.

\begin{theorem}
\label{theo:pgauss}
In the LSN Error Model (\cref{def:error_model}), \textsc{Pooled Gauss} finds the period $\vec s \in \F_2^n$ of a Simon function $f_{\vec s}$ using $\tilde{\Theta}\left( n^2 \right)$ many $\LSN_{n,\tau}$-samples, coming from practical measurements of Simon's circuit $Q^{\Simon}_{f_{\vec s}}$ with error rate~$\tau$, in time  $\tilde{\Theta}\left( 2^{\log\left(\frac{1}{1-\tau}\right)\cdot n} \right)$. This improves over classical period finding for error rates
\[\tau < 1-\frac{1}{\sqrt 2} \approx 0.293.
\]
\end{theorem}


\begin{proof}
We use \cref{alg:LPNtoLSN}, where any $\bigO_{\LPN}(\vec s)$-call is provided by a measurement of $Q^{\Simon}_{f_{\vec s}}$. In the LSN Error Model, this gives us an $\LSN_{n,\tau}$-instance which is transformed by \cref{alg:LPNtoLSN} into an $\LPN_{n,\tau}$-instance. We use \textsc{Pooled Gauss} as the LPN-solver ${\cal A}_{\LPN}$ inside \cref{alg:LPNtoLSN}. This immediately implies time complexity $\tilde{\Theta}\left( 2^{\log\left(\frac{1}{1-\tau}\right)\cdot n} \right)$.

It remains to show outperformance of the classical algorithm, i.e. $\log\left(\frac{1}{1-\tau}\right)<\frac 1 2$. Notice that our condition $\tau < 1-\frac{1}{\sqrt 2}$ implies that $\frac{1}{1-\tau} < \sqrt{2}$ and therefore
\[
  \log\left(\frac{1}{1-\tau}\right) < \log(\sqrt{2}) = \frac 1 2.
\]
\end{proof}

\cref{theo:pgauss} already shows the usefulness of a quite limited quantum oracle that only allows us polynomially many measurements, whenever its error rate $\tau$ is small enough.

If we allow for more quantum measurements, the \textsc{Well-Pooled Gauss} algorithm~\cite{C:EssKubMay17} solves $\LPN_{n,\tau}$ in improved time and
 query complexity $\tilde{\Theta}( 2^{f(\tau) n})$, where $f(\tau) = 1-\frac{1}{1+\log(\frac{1}{1-\tau})}$. The following theorem shows that \textsc{Well-Pooled Gauss} in combination with error-prone quantum measurements improves on classical period finding for {\em any} error rate $\tau$.

\begin{theorem}
\label{theo:wpgauss}
In the LSN Error Model (\cref{def:error_model}), \textsc{Well Pooled Gauss} finds the period $\vec s \in \F_2^n$ of a Simon function $f_{\vec s}$  using $\tilde{\Theta}( 2^{f(\tau) n})$ many $\LSN_{n,\tau}$-samples, coming from practical measurements of Simon's circuit $Q^{\Simon}_{f_{\vec s}}$ with error rate $\tau$, in time  $\tilde{\Theta}( 2^{f(\tau) n})$, where
\[f(\tau) = 1-\frac{1}{1+\log(\frac{1}{1-\tau})}.\]
This improves over classical period finding for {\em all error rates} $\tau < \frac 1 2$.
\end{theorem}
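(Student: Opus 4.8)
The plan is to mirror the proof of \cref{theo:pgauss} almost verbatim, only swapping the LPN-solver. Concretely, I would run \cref{alg:LPNtoLSN}, answering every call to ${\cal O}_{\LPN}(\vec s)$ by a fresh measurement of $Q^{\Simon}_{f_{\vec s}}$. By \cref{def:error_model} such a measurement is an $\LSN_{n,\tau}$-sample, and by the correctness part of \cref{theo:equivalence} the transformed pairs $(\vec y_i+b_i\vec z, b_i)$ are perfectly distributed $\LPN_{n,\tau}$-samples whenever $\langle \vec z,\vec s\rangle=1$, which occurs with probability $\frac12$. Using \textsc{Well-Pooled Gauss} as the solver ${\cal A}_{\LPN}$ then outputs $\vec s$ in time and query complexity $\tilde\Theta(2^{f(\tau)n})$; the constant factor $\frac12$ in the success probability and the polynomial overhead of the reduction are absorbed into the $\tilde\Theta$.

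The only genuinely new thing to verify is that this beats the classical lower bound $\Omega(2^{n/2})$ for every $\tau\in[0,\frac12)$, i.e. that $f(\tau)<\frac12$. I would establish this by the chain of equivalences
\begin{align*}
f(\tau)<\tfrac12
&\iff \frac{1}{1+\log\!\left(\frac{1}{1-\tau}\right)}>\tfrac12
\iff 1+\log\!\left(\frac{1}{1-\tau}\right)<2 \\
&\iff \log\!\left(\frac{1}{1-\tau}\right)<1
\iff \frac{1}{1-\tau}<2
\iff \tau<\tfrac12,
\end{align*}
using only that $\log$ is increasing and $1-\tau>0$. Since the hypothesis is exactly $\tau<\frac12$, we conclude $f(\tau)<\frac12$, hence $2^{f(\tau)n}$ is $o(2^{n/2})$ even after multiplication by the polylogarithmic factors hidden in $\tilde\Theta$, which gives the claimed speedup for all $\tau<\frac12$.

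There is essentially no hard step: the reduction machinery is already supplied by \cref{theo:equivalence}, and the runtime bound for \textsc{Well-Pooled Gauss} is quoted from \cite{C:EssKubMay17}. The one point worth a sentence of care is that, unlike \textsc{Pooled Gauss}, \textsc{Well-Pooled Gauss} consumes $\tilde\Theta(2^{f(\tau)n})$ samples rather than polynomially many; this is unproblematic because the (smoothed) quantum device can be re-run arbitrarily often, so the $\LSN$ oracle—and via \cref{alg:LPNtoLSN} the simulated $\LPN$ oracle—supports as many queries as \textsc{Well-Pooled Gauss} requests. For the boundary behaviour I would additionally remark that $f(\tau)\to0$ as $\tau\to0$, recovering Simon's polynomial running time, and $f(\tau)\to\frac12$ only in the limit $\tau\to\frac12$, which is precisely why the exponential speedup degrades to a polynomial one of degree $\frac{1}{2f(\tau)}>1$, consistent with the introduction.
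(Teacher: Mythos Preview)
Your proposal is correct and follows essentially the same approach as the paper: invoke \cref{alg:LPNtoLSN} with \textsc{Well-Pooled Gauss} as ${\cal A}_{\LPN}$ exactly as in the proof of \cref{theo:pgauss}, then verify $f(\tau)<\frac12$ from $\tau<\frac12$. The paper's own argument for the inequality is the forward chain $\tau<\frac12\Rightarrow\frac{1}{1-\tau}<2\Rightarrow\log\bigl(\frac{1}{1-\tau}\bigr)<1\Rightarrow f(\tau)<1-\frac12=\frac12$, which is just one direction of your chain of equivalences; your additional remarks on sample complexity and boundary behaviour are fine but not needed.
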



\begin{proof}
As in the proof of \cref{theo:pgauss} we use \cref{alg:LPNtoLSN}, where measurements of $Q^{\Simon}_{f_{\vec s}}$ provide the $\bigO_{\LPN}(\vec s)$-calls and \textsc{Well Pooled Gauss } is the LPN-solver ${\cal A}_{\LPN}$. Correctness and the claimed complexities follow immediately.

It remains to show outperformance of any classical period finding algorithm. Notice that $\tau < \frac 1 2$ implies $\frac{1}{1-\tau} < 2$ and therefore $\log(\frac{1}{1-\tau}) < 1$. This in turn implies
\[
  f(\tau) = 1-\frac{1}{1+\log(\frac{1}{1-\tau})} < 1-\frac{1}{2} = \frac 1 2.
\]
\end{proof}

The results of \cref{theo:pgauss} and \cref{theo:wpgauss} show that quantum measurements of $Q^{\Simon}_{f_{\vec s}}$ help us (asymptotically) even for large error rates $\tau$, provided that our error model is sufficiently accurate.

%
%

\section{Practical Error Handling for Simon's Algorithm}
\label{sec:cexperiments}
In this section, we compare the practical runtimes needed to find periods $\vec s$ with the smoothed experimental data from our \IBMQ\ quantum measurements (see \cref{fig:simon_smooth_experiment}) with purely classical period finding. 

Notice that our LPN-solvers incur some polynomial overhead, which makes them for very small $n$ as on \IBMQ\ inferior to purely classical period finding. Moreover, we would like to stress the experimental result of \cref{sec:ibmq} that  \IBMQ's error rate $\tau(n)$ is a function increasing in $n$. So even if asymptotically LPN-solvers outperform classical period finding, a fast convergence of $\tau(n)$ towards $\frac 12$ prevents  practical quantum advantage.

\subsubsection{Periods classically.}
Let us start with the description of an optimal classical period finding algorithm, inspired by~\cite{MontanaroW16}. Naively, one may think that it is optimal to query $f_{s}$ at different random points $\vec x_i$, until one hits the first collision $f_{\vec s}(\vec x_i) = f_{\vec s}(\vec x_j)$. However, assume that we have already queried the set of points $P=\{\vec x_1, \vec x_2, \vec x_3\}$,  without obtaining a collision. This gives us the information that $\vec s$ is {\em not} in set of distances $D=\{\vec x_1 + \vec x_2, \vec x_2 + \vec x_3 , \vec x_1 + \vec x_3 \}$. This implies that we should not ask $\vec x_1 + \vec x_2 + \vec x_3$, since it lies at distance $\vec x_1 + \vec x_2$ of $\vec x_3$. Hence on optimal algorithm keeps track of the set $D$ of all excluded distances. This is realized in our algorithm~\textsc{Period}, see \cref{alg:period}.
\begin{algorithm}[htb]
	\DontPrintSemicolon
	\SetAlgoLined
	\SetKwInOut{Input}{Input}\SetKwInOut{Output}{Output}
	
	\SetKwComment{COMMENT}{$\triangleright$\ }{}%
	
	\Input{ Access to $f_{\vec s}$.
	}
	\Output{
		Secret $\vec s$.
	}
	\Begin{
		Set $P=\{(\vec 0,f_{\vec s}(\vec 0))\}$.\COMMENT*{Set of queried points.\hspace*{-1em}}
		Set $D=\{\vec 0\}$. \COMMENT*{Set of distances.\hspace*{-1em}}
		\Repeat{$\exists \vec x'\neq \vec x: (\vec x',f_{\vec s}(\vec x))\in P$ \textbf{\em or }$|D| = 2^n-1$}{
			Select $\vec x\in\text{argmax}\{|\{ \vec s'\in D\mid(\vec x+\vec s', \cdot)\not\in P\}|\}$. \COMMENT*{Optimal next query.\hspace*{-1em}}\label{line:select}
			$P:= P \cup (\vec x,f_{\vec s}(\vec x))$ \COMMENT*{Update queries.\hspace*{-1em}}
			\For{$(\vec x',f_{\vec s}(\vec x')) \in P$}{\label{line:add_loop1}
				$D:= D \cup\{ \vec x+\vec x'\}$ \COMMENT*{Update distances.\hspace*{-1em}} \label{line:add_loop2}
			}\label{line:add_loop3}
		}
		\lIf{$|D| = 2^n-1$}{
			\KwRet{$\vec s\in \F_2^n\setminus D$.} \hfill\COMMENT*[h]{Only possible period.\hspace*{-1em}}\hspace*{-0.34em}
		}
		\lElse{
			\KwRet{$\vec x+\vec x'$.} \hfill\COMMENT*[h]{Collision found.\hspace*{-1em}}\hspace*{-0.34em}
		}
	}
	\caption{\textsc{Period}\label{alg:period}}
	
\end{algorithm}
\subsubsection{Periods quantumly.} By the result of \cref{sec:reduction} we may first transform our quantum measurements into \LPN\ samples, and then use one of the \LPN-solvers from \cref{sec:correct}.
Since the error rates from our \textit{smoothed} IBM-Q16 measurements (\cref{fig:simon_smooth_experiment}) are below $\frac 1 8$, according to \cref{theo:pgauss}  we may  use \textsc{Pooled Gauss}.

Instead of applying the \LSN-to-\LPN\ reduction to our smoothed data, we directly adapt \textsc{Pooled Gauss} into an \LSN-solver, called \textsc{Pooled LSN} (\cref{alg:PS}).
\textsc{Pooled LSN} can be considered as a fault-tolerant version of \textsc{Simon} (\cref{alg:simon}) that iterates until we obtain an error-free set of $n-1$ linearly independent vectors. Notice that error-freeness can be tested, since the resulting potential period $\vec s'$ is correct iff $f_{\vec s}(\vec s'){=}f_{\vec s}(\vec 0)$.

%
\begin{algorithm}[htb]
	\DontPrintSemicolon
	\SetAlgoLined
	\SetKwInOut{Input}{Input}\SetKwInOut{Output}{Output}

	\SetKwComment{COMMENT}{$\triangleright$\ }{}

	\Input{ Pool $P \subset \F_2^n$ of \LSN\ samples with $|P| \geq n-1$
	}
	\Output{
		Secret $\vec s$
	}
	\Begin{
		\Repeat{$f_{\vec s}(\vec s')\overset{?}{=}f_{\vec s}(\vec 0)$}{
			Randomly select a linearly independent set $Y = \{\vec y_1, \ldots ,  \vec y_{n-1} \} \subseteq P$.\;
			Compute the unique $\vec s' \in Y^{\perp} \setminus \{\vec 0\}$.\;
		}
		\KwRet{$\vec s'$.}
	}
	\caption{\textsc{Pooled LSN}\label{alg:PS}}
\end{algorithm}

\vspace*{-1em}
\subsubsection{Run time comparison.} \textsc{Period} and \textsc{Pooled LSN} exponentially often iterate their \textbf{repeat}-loops, where each iteration runs in polynomial time (using the right data structure). Hence, asymptotically the number of iterations dominate runtimes for both algorithms. For ease of simplicity, we take as cost measure only the exponential number of loops, ignoring all polynomial factors (the polynomial factors actually dominate in practice for our small dimensions $n$). 
\begin{wrapfigure}{r}{0.5\textwidth}
		\vspace*{-1em}
	\centering
	\hspace*{-0.25cm}
	\begin{tikzpicture}[scale=0.75]
		\begin{axis}[
			legend style={legend pos=north west ,cells={align=left}},
			xlabel=Dimension,
			ylabel=Iteration (log),
			]
			
			\addplot+[] coordinates {
				(2,1.14158)
				(3,1.70256)
				(4,2.05105)
				(5,2.37014)
				(6,2.64965)
				(7,2.91846)
			};
			\addlegendentry{\textsc{Pooled LSN} loops}
			\addplot+[] coordinates {
				(2,0.7350021934953096)
				(3,1.274708939214712)
				(4,1.8742854904980353)
				(5,2.4584869999743257)
				(6,2.9637515386552695)
				(7,3.4615408330178317)
			};
			\addlegendentry{\textsc{Period} loops}
		\end{axis}
	\end{tikzpicture}%
	\caption{Log-scaled loop iterations of \textsc{Pooled LSN} and \textsc{Period}, averaged over 10.000 iterations. }\vspace*{-1em}
	\label{fig:plot:runtime}
	\vspace*{-1em}
\end{wrapfigure}
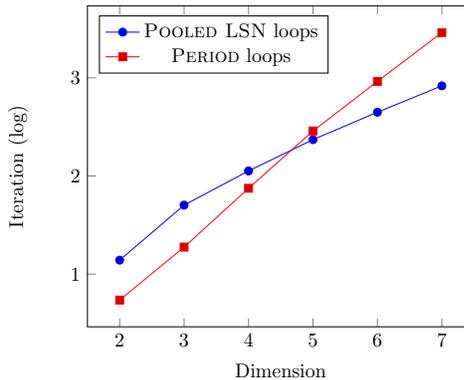
Using this (over-)simplified loop cost measure, we ran 10.000 iterations of \textsc{Period} for $n=2, \ldots, 7$ and averaged over the runtimes. For the quantum period finding we took as pool $P$ the complete smoothed data of \cref{fig:simon_smooth_experiment}.
We then also ran 10.000 iterations of \textsc{Pooled LSN} for $n=2, \ldots, 7$ and averaged over the runtimes.
The resulting log-scaled runtimes are depicted in  \cref{fig:plot:runtime}.

As expected, \textsc{Period}'s experimental runtime exponent is $\frac n 2$. For \textsc{Pooled LSN}, we obtain an experimental regression line of roughly $\frac n 3$, where the slope seems to decrease with $n$. This results in a cut-off point for the loop numbers between $n=4$ and $n=5$. Thus, experimentally we obtain quantum advantage, at least for our loop cost measure. 

\paragraph{Acknowledgement.}
We acknowledge use of the IBM Q for this work. The views expressed are those of the authors and do not reflect the official policy or position of IBM or the IBM Q team.

%
%

\bibliographystyle{Bib/splncs03}
\bibliography{Bib/abbrev3,Bib/crypto,Bib/IBM}
\newpage
\appendix
\section{Appendix}
\label{sec:appendix}
\begin{figure}[htb]
	\begin{center}
		\begin{subfigure}[t]{0.49\textwidth}
			\centering
			\includegraphics[width=0.65\textwidth]{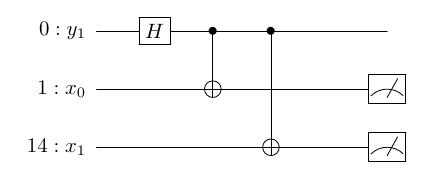}
			\caption{$n = 2$}
		\end{subfigure}
		\hfill
		\begin{subfigure}[t]{0.49\textwidth}
			\centering
			\includegraphics[width=0.65\textwidth]{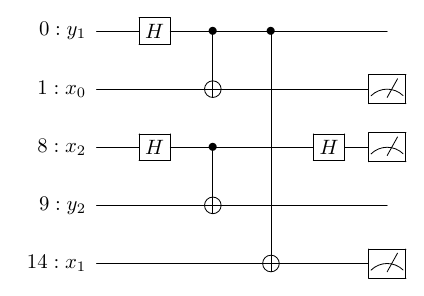}
			\caption{$n = 3$}
		\end{subfigure}
		\\
		\begin{subfigure}[t]{0.49\textwidth}
			\centering
			\includegraphics[width=0.65\textwidth]{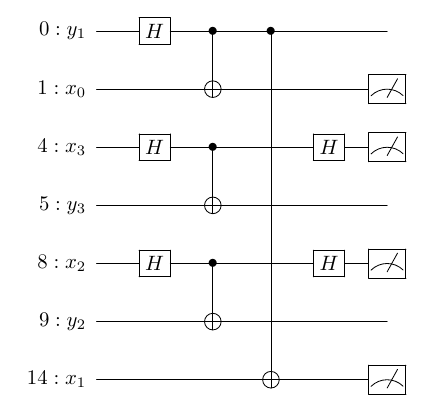}
			\caption{$n = 4$}
		\end{subfigure}
		\hfill
		\begin{subfigure}[t]{0.49\textwidth}
			\centering
			\includegraphics[width=0.65\textwidth]{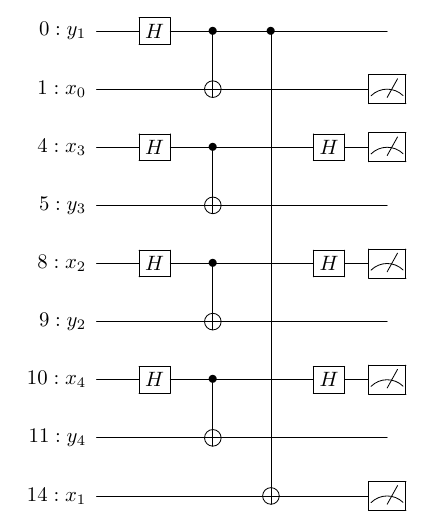}
			\caption{$n = 5$}
		\end{subfigure}
		\\
		\begin{subfigure}[t]{0.49\textwidth}
			\centering
			\includegraphics[width=0.65\textwidth]{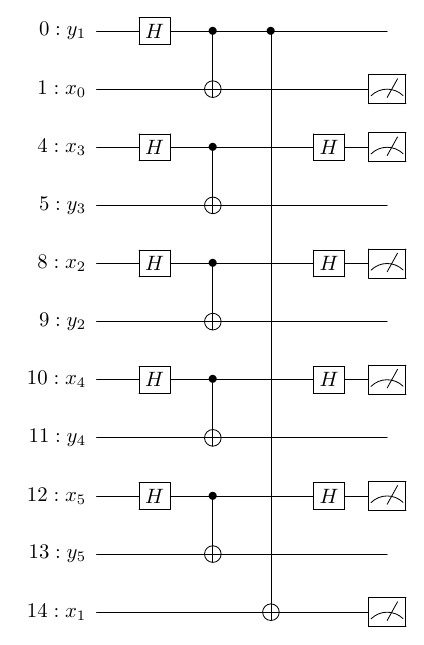}
			\caption{$n = 6$}
		\end{subfigure}
		\hfill
		\begin{subfigure}[t]{0.49\textwidth}
			\centering
			\includegraphics[width=0.65\textwidth]{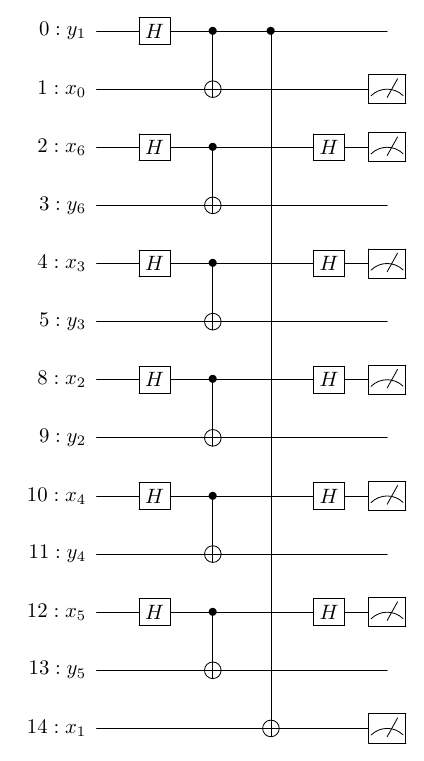}
			\caption{$n = 7$}
		\end{subfigure}
	\end{center}
	\vspace*{-0.5cm}
	\caption{Optimized circuits for $n=2, \ldots, 7$ with $\vec s = 0^{n-2}11$. We omit qubits $6, 7$ and so input $y_0$, which is not required after optimization.}\label{fig:collection_circuit}
	\vspace*{-3cm}
\end{figure}
\end{document}